\documentclass[11pt]{article}

\usepackage[utf8]{inputenc}
\usepackage[T1]{fontenc}
\usepackage{amsmath,amssymb,amsfonts,amsthm}
\usepackage{graphicx}
\usepackage{booktabs}
\usepackage[dvipsnames]{xcolor}
\usepackage[margin=1in]{geometry}
\usepackage[authoryear,round]{natbib}
\usepackage{hyperref}
\definecolor{darkblue}{RGB}{0,51,102}
\hypersetup{colorlinks=true,citecolor=darkblue,linkcolor=darkblue,urlcolor=darkblue}

\newtheorem{proposition}{Proposition}
\newtheorem{remark}{Remark}
\newtheorem{example}{Example}

\theoremstyle{plain}

\newcommand{\bpi}{\boldsymbol{\pi}}
\newcommand{\btheta}{\boldsymbol{\theta}}
\newcommand{\bw}{\boldsymbol{w}}
\newcommand{\bx}{\boldsymbol{x}}
\newcommand{\by}{\boldsymbol{y}}
\newcommand{\bz}{\boldsymbol{z}}
\newcommand{\bX}{\boldsymbol{X}}

\title{Quantifying Portfolio Demutualization:\\A Benchmark-Relative Pooling--Profiling Scale}
\author{%
Arthur Charpentier\thanks{Universit\'e du Qu\'ebec \`a Montr\'eal, Canada; Kyoto University, Japan; Chaire PARI (Institut Europlace de Finance--ILB, ENSAE Paris \& Sciences Po), France.}
\and
Laurence Barry\thanks{Chaire PARI (Institut Europlace de Finance--ILB, ENSAE--CREST Paris \& Sciences Po), France.}%
}
\date{}

\begin{document}
\maketitle

\begin{abstract}
Insurance pricing combines pooling with differentiation: a tariff may leave benchmark differences in expected loss partly mutualized or translate them into policy-level premium differences. We propose a benchmark-relative pooling--profiling scale with two complementary coordinates. The coupled $L^p$ coordinate measures policy-level alignment between an evaluated tariff and a stated benchmark pure premium, whereas the marginal Wasserstein coordinate compares their exposure-weighted premium distributions. The difference between their residual $p$-costs defines an allocation mismatch. Under portfolio balance, the coupled $L^1$ coordinate has an exact actuarial interpretation: it is the fraction of the transfer volume induced by full pooling that the tariff removes. Synthetic and motor-insurance applications show that broad classes, proxies, shrinkage and tail caps can affect marginal differentiation, policy-level allocation and transfers differently. A barycentric group-parity intervention further shows that conditional premium disparities can fall mainly through reallocation and restored benchmark-relative transfers, with little change in marginal differentiation. The framework is descriptive, fixed-portfolio and explicitly conditional on the chosen benchmark.
\end{abstract}

\noindent\textbf{Keywords:} insurance pricing; mutualization; demutualization; risk classification; cross-subsidies; Wasserstein distance.

\section{Introduction}
\label{sec:introduction}

Insurance has always been a classificatory institution. It transforms uncertain individual losses into collectively manageable risks by gathering policyholders into pools, estimating expected losses within those pools, and charging premiums that make the transfer of risk financially viable. Pooling is not merely a technical convenience. It is constitutive of insurance: without a pool, the law of large numbers has no object; without some residual uncertainty about who will suffer a loss, pooling is impossible. The exchange ceases to be insurance in the ordinary sense and becomes saving, credit, or direct loss financing. Yet insurance pricing has also always relied on segmentation. Insurers partition policyholders into classes regarded as sufficiently homogeneous for pricing purposes and adjust premiums according to observed or inferred differences in expected loss.

This dual nature creates permanent tension. Pooling is the mechanism through which uncertainty is mutualized and shared, whereas segmentation is the mechanism through which cross-subsidies between heterogeneous risks are reduced. Insurance pricing therefore lies on a continuum between two limiting poles. At one pole, a fully pooled tariff charges everyone the same premium and leaves all observable risk heterogeneity unpriced. At the other, a fully profiled tariff charges each policyholder according to a stated estimate of their individual risk. Most actual tariffs lie between these extremes. They neither erase risk differences altogether nor translate every available signal into price. Instead, they create pools of different sizes and leave different amounts of heterogeneity unpriced within them. The central aim of this article is to make this intermediate position measurable through a coherent family of indices.

We call this continuum the pooling--profiling scale. The phrase captures a shift that has been described in sociological and actuarial discussions of insurance but rarely formalized as a set of measurable portfolio properties. The contemporary promise of big data and predictive analytics is that insurers may move from risk classes to risk profiles, from group-level pricing to much more granular risk assessment, and from mutualized uncertainty to behavior-based or algorithmically differentiated premiums \citep{BarryCharpentier2020,CevoliniEsposito2020}. Insurers often present this move as progress toward greater accuracy and actuarial fairness \citep{Meyers2018,meyersvanhoyweghen2018}, as each insured would pay for their ``own'' risk rather than for the risk of others. But this formulation hides the distributive question. A price closer to individualized predicted loss is not simply more accurate; it also changes the pattern of transfers within the portfolio, the size and meaning of risk pools, and potentially the accessibility of insurance for high-risk or vulnerable groups.

Indices of mutualization and demutualization are therefore timely. Behind the promise of individualization through behavioral insurance \citep{JeanningrosMcFall2020}, insurers' pricing practices remain largely opaque for reasons of competition and trade secrecy. Yet a pricing model governs who contributes to the pool, who benefits from it, who may be priced out of it, and which differences are recognized as legitimate. Measuring demutualization is a necessary step toward making the trade-off between actuarial adequacy and solidarity observable.

This paper develops a benchmark-relative framework for quantifying portfolio demutualization. We use the term in a pricing sense: demutualization is the movement from a pooled premium toward a stated profiling benchmark, together with the removal of benchmark-relative transfers. The paper makes three contributions. First, it defines explicit pooling and profiling endpoints and calibrates a canonical path between them. Second, it introduces two complementary coordinates. A coupled coordinate compares each policyholder's premium with their own benchmark risk, whereas a distributional Wasserstein coordinate compares the marginal monetary distributions of premiums and benchmark risks. Their divergence is an allocation mismatch, separating the amount of differentiation from the identities of those who bear it. Third, under portfolio balance, the coupled coordinate has an exact transfer interpretation: it equals the fraction of the transfer volume generated by full pooling that the tariff removes. Its complement is therefore a direct measure of remaining benchmark-relative mutualization.

The distinction between the two coordinates is central. A tariff may reproduce the benchmark distribution while allocating it to the wrong policyholders; distributional similarity alone is not enough. Another tariff may remain strongly coupled to the benchmark while compressing all differences toward the mean; in that case, individual alignment is preserved but substantial mutualization remains. The pooling--profiling scale is therefore not a unique scalar ordering of tariffs. It is a common normalization on which coupled and distributional measures provide complementary coordinates. Group transfers, Gini and calibration diagnostics, benchmark sensitivity, and uncertainty quantification then help interpret the monetary mechanisms behind those coordinates.

The empirical application shows why this distinction matters. Starting from a fixed motor insurance portfolio and an operational profiling benchmark, we compare pooled, GLM, boosted tree, shrinkage and benchmark tariffs. The proposed measures quantify how much benchmark heterogeneity is monetized, how closely the resulting premiums are allocated to the same policies, and how much of the full-pooling transfer volume remains. A barycentric group parity correction provides a final illustration: it sharply reduces gender-conditional premium differences mainly through premium reallocation and the restoration of benchmark-relative transfers, rather than through a broad compression of the marginal premium distribution.

The proposed quantities are diagnostic, not normative. They do not identify an optimal degree of mutualization, decide which rating factors are legitimate, or incorporate endogenous participation and coverage choices. They make visible how a stated pricing rule converts an explicit representation of risk heterogeneity into prices and transfers on a fixed portfolio. Technical adequacy, solidarity, non-discrimination and affordability can then be assessed as distinct objectives rather than compressed into a single tariff statistic.

Section~2 describes the motivation and the relevance in the context of big data. Section~3 introduces the benchmark-relative coordinates and their main properties. Section~4 develops the operational diagnostics, inference and the group-parity intervention. Sections~5 and~6 present the synthetic and motor-insurance applications.

\section{Insurance pricing between pooling and profiling}
\label{sec:pooling-profiling}

Modern insurance emerged with the statistical treatment of uncertainty \citep{BarryCharpentier2020}. Its basic operation consists in transforming events that are unpredictable at the individual level into statistical regularities that can be estimated at the collective level. In this sense, insurance depends on a specifically statistical imagination of the future: uncertainty is not eliminated but becomes manageable when adopting the aggregate viewpoint of the population. The individual case is not ignored but treated as one observation within a larger regularity. Insurance pricing therefore did not begin with an individual prediction, but with the construction of classes within which individuals could be considered equivalent for risk calculation.

Insurance offers protection precisely because no one knows in advance who among the policyholders will be unlucky. The premiums paid by those who do not suffer losses finance the claims of those who do. This ex-post redistribution is not a defect of insurance; it is its core mechanism. \citet{ThieryVanSchoubroeck2006} refer to this as ``chance solidarity'': policyholders with comparable risk profiles share the aleatoric occurrence of the event with the other members of their class \citep[see also][]{LehtonenLiukko2015}.

This helps explain why the language of pooling can be misleading when used too broadly. Pooling does not necessarily mean charging everyone the same premium. A highly segmented portfolio remains pooled within each tariff class, thereby preserving chance solidarity inside the class. Conversely, a uniform price across heterogeneous risks creates risk solidarity \citep{LehtonenLiukko2015} but may also generate adverse selection. Indeed, if individuals are pooled without regard to observable and material differences in expected loss, lower-risk policyholders might perceive themselves as unduly charged. In competitive markets, they may become targets for competitors and leave the pool, weakening the balance of the remaining portfolio. Insurance therefore depends on a delicate construction: policyholders must be similar enough for a common price to be technically plausible, but not so perfectly known that the uncertainty supporting the pool disappears.

Classification is the standard actuarial response to heterogeneity and the risk of adverse selection. The purpose is to form classes within which policyholders are sufficiently similar and between which expected loss differences justify different premiums. Segmentation is therefore not external to mutualization. It is the way modern insurance makes mutualization technically possible under heterogeneity.

Yet classification does not simply reveal pre-existing homogeneous groups. It also constructs them. As \citet{BarryCharpentier2020} argue, insurance classes are produced by statistical practices: by the information collected, the categories through which individuals are described, and the modelling conventions or regulations through which some differences become pricing-relevant while others remain pooled. Traditional segmentation is also constrained by practical and statistical limits. Each class must contain enough exposure for experience to be credible. A rating factor is therefore a device that makes some dimensions of difference visible and actionable. The resulting class is both an empirical object and a pricing convention. This constructed character matters for demutualization. A more granular tariff does not necessarily approach a unique or intrinsic notion of individual risk. It changes how heterogeneity is represented and allocated, creates new boundaries between policyholders, and therefore creates new patterns of pooling and transfer.

Bonus--malus systems illustrate the intermediate nature of actuarial segmentation. They introduce individual claims experience into pricing and thereby differentiate policyholders within broader classes, but they do not abolish the class logic. They refine it. Past claims provide a signal of unobserved heterogeneity, while the resulting relativities remain embedded in a collective tariff structure. Similarly, credibility theory does not eliminate pooling; it governs the balance between individual or class-specific experience and collective experience. The actuarial tradition has therefore long operated between two principles: differentiate when credible evidence of heterogeneity is available, and pool when individual information is insufficient, unstable, prohibited or judged irrelevant.

The pooling--profiling scale we propose here builds on this observation. It does not oppose pooling to segmentation. Rather, it treats segmentation as a movement away from complete pooling and toward profiling. A tariff with a few broad classes leaves substantial heterogeneity pooled within classes. A highly granular tariff reduces intra-class heterogeneity and risk solidarity. The balance between these two poles has social consequences because segmentation can intensify exclusions and affordability pressures, but it can also create incentives for prevention. A very large pooling means broad risk solidarity, but when high-risks are also the most well-off, it might end up with lower-income policyholders subsidizing higher-income policyholders. \citet{FrezalBarry2019} argue that actuarial fairness is often misunderstood when it is treated as the uniquely fair answer to these issues. Expected loss alignment might be a professional and technical ideal, but it cannot by itself determine which risk differences should be priced \citep[see also][]{Arrow1963,Barry2023,meyersvanhoyweghen2018}.

The contemporary debate on demutualization has been intensified by big data, machine learning and connected devices. Telematics in motor insurance, wearable devices in health and life insurance, smart-home sensors in property insurance, credit-based insurance scores and largescale behavioral data all promise more accurate and more dynamic risk classification. The traditional tariff class may be replaced, or at least supplemented, by a score that is continuously updated and potentially specific to the individual.

The shift to profiling has important consequences. Indeed, ``behavior-based'' pricing can change the function of insurance. If premiums are continuously adjusted according to behavior, insurance may become less a protection against misfortune than a system of incentives operating through feedback loops \citep{TanninenLehtonenRuckenstein2021,TanninenLehtonenRuckenstein2022}. Besides, big data also changes the informational structure of insurance markets. In classical adverse selection models, policyholders know more about their own risk than insurers do. Data-intensive insurance can partly reverse this asymmetry: insurers may infer risk-relevant information that policyholders themselves do not know or cannot interpret \citep{BrunnermeierLambaSeguraRodriguez2025,ElingGemmoGuxhaSchmeiser2024}. This reversal may reduce adverse selection and moral hazard, but may also increase insurers' capacity to classify, target, exclude or discipline policyholders.

For our purposes, the key lesson is that profiling is not a single technological state but includes a myriad of situations. A portfolio may become more profiled because behavioral variables replace demographic ones, because premiums become more dispersed, because classes become smaller, because within-class heterogeneity falls, or because transfers between benchmark risk levels shrink. These phenomena are related but not identical. They must be measured from the monetary tariff rather than inferred from the number of variables or the complexity of the model.

Our pooling--profiling scale therefore comes to measure the demutualization effect associated with more granular data included into pricing. It does not decide where the optimal position lies. It makes visible how far a portfolio has moved, which dimensions of mutualization have been reduced, and which policyholders bear the consequences. In this article, we use pricing demutualization in a narrow sense. Relative to a stated benchmark, a portfolio becomes more demutualized when its pricing structure moves from pooling toward profiling: premiums track benchmark risk estimates more closely, less benchmark heterogeneity remains pooled, and the transfers generated by full pooling are reduced.

This approach has two advantages. First, it permits comparisons across products, firms, periods and regulatory scenarios. One can ask how adding telematics, removing gender, capping a rating factor or changing territorial relativities moves the same portfolio along the pooling--profiling scale. Second, it connects conceptual discussions of personalization to actuarial practice. Rather than asking whether big data ends insurance, it asks how much benchmark-relative mutualization remains under a stated tariff.

The object studied below is fixed-portfolio and benchmark-relative. It compares alternative premium vectors for the same policyholders and exposure weights. It does not incorporate changes in participation, coverage or market composition, and it does not claim that the benchmark is a neutral representation of true individual risk. These restrictions define the object being measured: the monetary organization of mutualization under a stated pricing rule. The next section turns this object into coupled and distributional coordinates, together with an exact transfer interpretation for the coupled $L^1$ case.

\section{Measuring benchmark-relative demutualization}
\label{sec:coordinates}
\label{sec:setup}

\subsection{Portfolio, benchmark, and balance}
\label{subsec:portfolio-benchmark}

Consider $n$ policies with raw exposures $e_i>0$ and normalized exposure weights
\[
w_i=\frac{e_i}{\sum_{j=1}^n e_j},
\qquad
\sum_{i=1}^n w_i=1.
\]
Write $\bw=(w_1,\ldots,w_n)$, and let
\[
\btheta=(\theta_1,\ldots,\theta_n),
\qquad
\bpi=(\pi_1,\ldots,\pi_n)
\]
denote the benchmark and evaluated technical pure-premium vectors. In a simulation, $\theta_i$ may be an oracle conditional mean. In an empirical audit, $\btheta$ is an operational reference defined by a stated covariate set, model class, tuning rule and out-of-sample protocol. All conclusions below are conditional on that choice.

Define the exposure-weighted benchmark mean
\[
\bar\theta=\sum_{i=1}^n w_i\theta_i,
\]
and let $\boldsymbol 1$ denote the vector of $n$ ones. We assume throughout the main paper that the benchmark is non-degenerate: $\theta_i$ is not constant over the policies with positive exposure weight. If it is constant, the two endpoints below coincide and no pooling--profiling coordinate is defined.

A premium vector $\bpi$ is \emph{balanced relative to the benchmark} if
\begin{equation}
\label{eq:balance}
\sum_{i=1}^n w_i\pi_i=\bar\theta.
\end{equation}
Balance is not needed to define the distance-based coordinates. It is imposed in the actuarial audit to prevent changes in total premium volume from being confounded with changes in differentiation, and it is required for the exact transfer interpretation in Proposition~\ref{prop:transfer-coupled-l1}.

\subsection{Pooling and profiling endpoints}
\label{subsec:endpoints}

The \emph{pooling endpoint} charges the benchmark mean to every policy,
\[
\bpi^{\rm pool}=\bar\theta\boldsymbol 1,
\]
whereas the \emph{profiling endpoint} charges the benchmark policy by policy,
\[
\bpi^{\rm prof}=\btheta.
\]
These labels refer to the pricing of benchmark heterogeneity, not to the disappearance of insurance pooling itself. Even at the profiling endpoint, idiosyncratic claim uncertainty remains pooled ex post; what changes is whether differences in the stated expected-loss benchmark are left unpriced or translated into ex-ante premium differences.

For calibration, define the \emph{canonical pooling--profiling path}
\begin{equation}
\label{eq:canonical-path}
\bpi^{(a)}=(1-a)\bar\theta\boldsymbol 1+a\btheta,
\qquad 0\leq a\leq1.
\end{equation}
Along this path, $a$ is the share of benchmark heterogeneity translated linearly into premium differences. The construction resembles credibility-type shrinkage toward a collective mean, but the role of $a$ here is geometric rather than inferential \citep{WuthrichMerz2023}. The two coordinates below are normalized to return exactly $a$ on this path.

The scale uses two complementary views of the same tariff. The first keeps every evaluated premium paired with the benchmark premium of the same policy. The second ignores policy identities and compares only marginal monetary distributions. We define the two working coordinates directly for $p\geq1$; a more general construction based on positively homogeneous discrepancies is given in Appendix~\ref{sup:general-discrepancies}.

\subsection{Policy-level coupled coordinate}
\label{subsec:coupled}

For $\bz=(z_1,\ldots,z_n)$, define the exposure-weighted $L^p$ norm
\begin{equation}
\label{eq:weighted-lp}
\|\bz\|_{p,\bw}
=
\left(\sum_{i=1}^n w_i|z_i|^p\right)^{1/p}.
\end{equation}
We call the first coordinate \emph{coupled} because $\pi_i$ remains paired with $\theta_i$ for the same policy. Define
\begin{equation}
\label{eq:coupled-coordinate}
D_p^{\rm cpl}(\bpi;\btheta,\bw)
=
1-
\frac{\|\bpi-\btheta\|_{p,\bw}}
{\|\bar\theta\boldsymbol 1-\btheta\|_{p,\bw}}.
\end{equation}
The denominator is exactly the policy-level discrepancy produced by the pooling endpoint.

\begin{proposition}[Properties of the coupled profiling coordinate]
\label{prop:coupled-properties}
Let $p\geq1$ and let $\btheta$ be non-degenerate. Then
$D_p^{\rm cpl}$ satisfies:

\begin{itemize}
    \item[-] endpoint normalization,
    \[
    D_p^{\rm cpl}(\bar\theta\boldsymbol 1;\btheta,\bw)=0,
    \qquad
    D_p^{\rm cpl}(\btheta;\btheta,\bw)=1;
    \]

    \item[-] canonical-path calibration,
    \[
    D_p^{\rm cpl}\bigl(
    (1-a)\bar\theta\boldsymbol 1+a\btheta;
    \btheta,\bw
    \bigr)
    =a,
    \qquad 0\leq a\leq1;
    \]

    \item[-] invariance under a common positive monetary rescaling;

    \item[-] invariance under joint relabelling of policy values and
    exposure weights;

    \item[-] continuity away from benchmark degeneracy.
\end{itemize}

Moreover,
\begin{equation}
\label{eq:coupled-one-characterization}
D_p^{\rm cpl}(\bpi;\btheta,\bw)=1
\quad\Longleftrightarrow\quad
\pi_i=\theta_i
\quad\text{for every }i\text{ with }w_i>0.
\end{equation}
Thus the upper endpoint uniquely characterizes policy-level profiling.

By contrast,
\begin{equation}
\label{eq:coupled-zero-characterization}
D_p^{\rm cpl}(\bpi;\btheta,\bw)=0
\quad\Longleftrightarrow\quad
\|\bpi-\btheta\|_{p,\bw}
=
\|\bar\theta\boldsymbol 1-\btheta\|_{p,\bw},
\end{equation}
which does not in general imply
$\bpi=\bar\theta\boldsymbol 1$.
Finally,
\[
D_p^{\rm cpl}(\bpi;\btheta,\bw)\leq1,
\]
but the coordinate need not be non-negative outside the canonical
pooling--profiling range.
\end{proposition}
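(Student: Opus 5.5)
Write $\bz^{\rm pool}=\bar\theta\boldsymbol 1-\btheta$ and $N_p=\|\bz^{\rm pool}\|_{p,\bw}$. The plan starts by showing that $N_p>0$. This is the only point where non-degeneracy is used, and everything else follows from it together with elementary properties of the weighted seminorm \eqref{eq:weighted-lp}. Suppose $N_p=0$. Then $\theta_i=\bar\theta$ for every $i$ with $w_i>0$, which contradicts the assumption that $\btheta$ is non-constant on the positive-weight policies. Hence $D_p^{\rm cpl}$ is a well-defined real number for every $\bpi$.

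\emph{Normalization and calibration.} For the path premium, $\bpi^{(a)}-\btheta=(1-a)(\bar\theta\boldsymbol 1-\btheta)$. Absolute homogeneity of $\|\cdot\|_{p,\bw}$, valid for $p\ge1$, gives $\|\bpi^{(a)}-\btheta\|_{p,\bw}=|1-a|\,N_p=(1-a)N_p$ on $[0,1]$. Therefore $D_p^{\rm cpl}(\bpi^{(a)};\btheta,\bw)=a$. Taking $a=0$ and $a=1$ yields the two endpoint identities, so they need no separate argument.

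\emph{Invariances and continuity.} For a common rescaling $(\bpi,\btheta)\mapsto(c\bpi,c\btheta)$ with $c>0$, the benchmark mean becomes $c\bar\theta$. Numerator and denominator are both multiplied by $c$, so the ratio is unchanged. For a permutation $\sigma$ applied jointly to $\bpi$, $\btheta$ and $\bw$, the sums $\sum_i w_i\theta_i$ and $\sum_i w_i|z_i|^p$ are merely reindexed, so $\bar\theta$ and both norms are invariant. For continuity, the map $(\bpi,\btheta,\bw)\mapsto\|\bpi-\btheta\|_{p,\bw}$ is a composition of continuous functions. So is $N_p$, since $\bar\theta$ is linear in $(\bw,\btheta)$. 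The quotient is therefore continuous on the open set where $N_p>0$, which is exactly the set where the benchmark is non-degenerate.

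\emph{Characterizations and bound.} Since $\|\bpi-\btheta\|_{p,\bw}\ge0$, we get $D_p^{\rm cpl}\le1$. Equality holds iff $\sum_i w_i|\pi_i-\theta_i|^p=0$. Because each summand is non-negative, this happens iff $\pi_i=\theta_i$ whenever $w_i>0$, which proves \eqref{eq:coupled-one-characterization}. The zero characterization \eqref{eq:coupled-zero-characterization} is just a rearrangement of the definition.

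The step needing a small construction is the "does not in general imply" claim, together with the possibility of negative values. Take $\bpi=2\btheta-\bar\theta\boldsymbol 1$, the reflection of the pooling endpoint through $\btheta$. Then $\bpi-\btheta=-(\bar\theta\boldsymbol 1-\btheta)$, so its norm equals $N_p$ and $D_p^{\rm cpl}=0$. Yet $\bpi\neq\bar\theta\boldsymbol 1$ on the positive-weight policies, by non-degeneracy. This $\bpi$ is also balanced, which makes the example actuarially meaningful. For negativity, use $\bpi^{(a)}$ with $a<0$: the same homogeneity computation gives $D_p^{\rm cpl}=1-|1-a|=a<0$. This shows the coordinate is not non-negative outside $[0,1]$.
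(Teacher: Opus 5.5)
Your proof is correct and follows the same route as the paper: calibration comes from $\bpi^{(a)}-\btheta=(1-a)(\bar\theta\boldsymbol 1-\btheta)$ and homogeneity of the weighted norm, and the other properties are immediate once the denominator is nonzero. The paper leaves several points implicit (it uses the reversal in Example~\ref{ex:reversal} for negative values), and you supply them: the positive-denominator check, the balanced non-pooled tariff $2\btheta-\bar\theta\boldsymbol 1$ (the path at $a=2$) with $D_p^{\rm cpl}=0$, and the extension of the path to $a<0$ for negative values.
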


The canonical-path result follows from
\[
\bpi^{(a)}-\btheta
=
(1-a)(\bar\theta\boldsymbol 1-\btheta).
\]
The remaining properties follow directly from the weighted $L^p$ norm;
proofs are collected in Appendix~\ref{sup:main-proofs}. Notice that the profiling
endpoint is identified by the value one, whereas the value zero does not
uniquely identify the pooling tariff away from the canonical path.

\subsection{The $L^1$ transfer interpretation}
\label{subsec:transfer}

For $p=1$, balance turns the coupled distance into an exact measure of benchmark-relative transfers. Define the total positive transfer volume
\begin{equation}
\label{eq:transfer-volume}
T(\bpi;\btheta,\bw)
=
\sum_{i=1}^n w_i(\pi_i-\theta_i)_+.
\end{equation}
Policies with $\pi_i>\theta_i$ contribute more than their benchmark pure premium, whereas those with $\pi_i<\theta_i$ contribute less. Under balance, the positive and negative parts have the same weighted total.

\begin{proposition}[Transfer interpretation of the coupled $L^1$ coordinate]
\label{prop:transfer-coupled-l1}
Let $\btheta$ be non-degenerate and let $\bpi$ satisfy \eqref{eq:balance}. Then
\[
T(\bpi;\btheta,\bw)
=
\frac12\|\bpi-\btheta\|_{1,\bw},
\]
and hence
\begin{equation}
\label{eq:transfer-interpretation}
D_1^{\rm cpl}(\bpi;\btheta,\bw)
=
1-
\frac{T(\bpi;\btheta,\bw)}
{T(\bpi^{\rm pool};\btheta,\bw)}.
\end{equation}
\end{proposition}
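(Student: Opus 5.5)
The plan is to write each difference $d_i=\pi_i-\theta_i$ as the difference of its positive and negative parts and to use balance to show that these two parts carry equal weighted mass. The first identity then follows from the elementary decompositions $d_i=(d_i)_+-(d_i)_-$ and $|d_i|=(d_i)_++(d_i)_-$, applied policy by policy. Multiplying by $w_i$ and summing, balance \eqref{eq:balance} gives
\[
\sum_{i=1}^n w_i d_i=\sum_{i=1}^n w_i\pi_i-\bar\theta=0,
\]
so $\sum_i w_i(d_i)_+=\sum_i w_i(d_i)_-$. Substituting into the absolute-value decomposition yields $\|\bpi-\btheta\|_{1,\bw}=2\sum_i w_i(d_i)_+=2T(\bpi;\btheta,\bw)$, which is the first claim.

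For \eqref{eq:transfer-interpretation}, the next step is to apply the same identity to the pooling endpoint. I will first check that $\bpi^{\rm pool}=\bar\theta\boldsymbol 1$ is itself balanced. This holds because $\sum_i w_i\bar\theta=\bar\theta$, using $\sum_i w_i=1$. Hence
\[
T(\bpi^{\rm pool};\btheta,\bw)=\tfrac12\|\bar\theta\boldsymbol 1-\btheta\|_{1,\bw}.
\]

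Substituting both identities into the definition \eqref{eq:coupled-coordinate} with $p=1$, the factors $\tfrac12$ cancel in the ratio, and \eqref{eq:transfer-interpretation} follows. The only point needing care is that the denominator is nonzero. This is where non-degeneracy enters: if $\|\bar\theta\boldsymbol 1-\btheta\|_{1,\bw}=0$, then $\theta_i=\bar\theta$ for every $i$ with $w_i>0$, which the standing assumption excludes. So $T(\bpi^{\rm pool};\btheta,\bw)>0$ and the ratio is well defined.

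There is no real obstacle in this argument; it consists of elementary identities. The one step worth stating explicitly is the verification that the pooling endpoint satisfies \eqref{eq:balance}, since the halving identity is used for both $\bpi$ and $\bpi^{\rm pool}$.
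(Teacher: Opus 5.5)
Your proof is correct and follows essentially the same route as the paper's: balance forces equal weighted mass in the positive and negative parts of $\pi_i-\theta_i$, giving $T=\tfrac12\|\bpi-\btheta\|_{1,\bw}$, and applying the same identity at the balanced pooling endpoint gives the ratio formula. Your explicit checks that $\bar\theta\boldsymbol 1$ is balanced and that non-degeneracy makes the denominator positive are details the paper's short proof leaves implicit.
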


Thus $D_1^{\rm cpl}$ is the fraction of the full-pooling benchmark-relative transfer volume removed by the evaluated tariff. Equivalently, $1-D_1^{\rm cpl}$ is the fraction that remains. This identity holds for every balanced tariff, not only on the canonical path. It gives the $L^1$ coordinate a direct actuarial interpretation in monetary units before normalization and as a share after normalization.

\subsection{Marginal Wasserstein coordinate}
\label{subsec:wasserstein}

The second coordinate asks a different question: does the tariff reproduce the benchmark's marginal monetary distribution, regardless of which policy receives which premium? For any portfolio vector $\bx=(x_1,\ldots,x_n)$, define its exposure-weighted empirical distribution
\[
\mu_{\bx}=\sum_{i=1}^n w_i\delta_{x_i}.
\]
The weights have a direct exposure interpretation: drawing a unit of exposure at random from the portfolio selects policy $i$ with probability $w_i$.

For probability measures $\mu$ and $\nu$ on $\mathbb R$ with finite $p$th moments, the $p$-Wasserstein distance is
\begin{equation}
\label{eq:wasserstein-definition}
W_p(\mu,\nu)
=
\left[
\inf_{\gamma\in\Pi(\mu,\nu)}
\int_{\mathbb R^2}|x-y|^p\,d\gamma(x,y)
\right]^{1/p},
\end{equation}
where $\Pi(\mu,\nu)$ is the set of probability measures on $\mathbb R^2$ with marginals $\mu$ and $\nu$. In one dimension, if
\[
Q_\mu(u)=\inf\{x:F_\mu(x)\geq u\},
\qquad 0<u<1,
\]
denotes the generalized quantile function, then
\begin{equation}
\label{eq:wasserstein-quantile}
W_p^p(\mu,\nu)
=
\int_0^1|Q_\mu(u)-Q_\nu(u)|^p\,du
\end{equation}
\citep{Santambrogio2015,Villani2009,PeyreCuturi2019,PanaretosZemel2020}.

Define the \emph{marginal Wasserstein coordinate}
\begin{equation}
\label{eq:wasserstein-coordinate}
D_p^W(\bpi;\btheta,\bw)
=
1-
\frac{W_p(\mu_{\bpi},\mu_{\btheta})}
{W_p(\delta_{\bar\theta},\mu_{\btheta})}.
\end{equation}

\begin{proposition}[Properties of the marginal Wasserstein profiling coordinate]
\label{prop:wasserstein-properties}
Let $p\geq1$ and let $\btheta$ be non-degenerate. Then
$D_p^W$ satisfies:

\begin{itemize}
    \item[-] endpoint normalization,
    \[
    D_p^W(\bar\theta\boldsymbol 1;\btheta,\bw)=0,
    \qquad
    D_p^W(\btheta;\btheta,\bw)=1;
    \]

    \item[-] canonical-path calibration,
    \[
    D_p^W\bigl(
    (1-a)\bar\theta\boldsymbol 1+a\btheta;
    \btheta,\bw
    \bigr)
    =a,
    \qquad 0\leq a\leq1;
    \]

    \item[-] invariance under a common positive monetary rescaling;

    \item[-] invariance under joint relabelling of policy values and
    exposure weights;

    \item[-] continuity away from benchmark degeneracy.
\end{itemize}

Moreover,
\begin{equation}
\label{eq:wasserstein-one-characterization}
D_p^W(\bpi;\btheta,\bw)=1
\quad\Longleftrightarrow\quad
\mu_{\bpi}=\mu_{\btheta}.
\end{equation}
Thus the upper endpoint uniquely characterizes equality of the
exposure-weighted marginal distributions, but not policy-level profiling.

By contrast,
\begin{equation}
\label{eq:wasserstein-zero-characterization}
D_p^W(\bpi;\btheta,\bw)=0
\quad\Longleftrightarrow\quad
W_p(\mu_{\bpi},\mu_{\btheta})
=
W_p(\delta_{\bar\theta},\mu_{\btheta}),
\end{equation}
which does not in general imply
$\mu_{\bpi}=\delta_{\bar\theta}$.
Finally,
\[
D_p^W(\bpi;\btheta,\bw)\leq1,
\]
but the coordinate need not be non-negative outside the canonical
pooling--profiling range.
\end{proposition}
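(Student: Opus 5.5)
The plan is to work throughout with the quantile representation \eqref{eq:wasserstein-quantile}. This turns every Wasserstein computation into a weighted $L^p$ computation on $(0,1)$, so the argument for Proposition~\ref{prop:coupled-properties} can be reused. First check that the denominator is positive: $W_p(\delta_{\bar\theta},\mu_{\btheta})^p=\sum_i w_i|\theta_i-\bar\theta|^p$, because the only coupling with a Dirac marginal is the product coupling. This sum is strictly positive by non-degeneracy, so $D_p^W$ is well defined.

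Endpoint normalization is then immediate. For $\bpi=\btheta$ the numerator is $W_p(\mu_{\btheta},\mu_{\btheta})=0$. For $\bpi=\bar\theta\boldsymbol 1$ the numerator equals the denominator.

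For canonical-path calibration, the key observation is that the map $x\mapsto(1-a)\bar\theta+ax$ is nondecreasing for $a\ge0$, so it commutes with generalized quantiles:
\[
Q_{\mu_{\bpi^{(a)}}}(u)=(1-a)\bar\theta+aQ_{\mu_{\btheta}}(u).
\]
It follows that $Q_{\mu_{\bpi^{(a)}}}-Q_{\mu_{\btheta}}=(1-a)\bigl(\bar\theta-Q_{\mu_{\btheta}}\bigr)$. Since $Q_{\delta_{\bar\theta}}\equiv\bar\theta$, taking $L^p(du)$ norms gives $W_p(\mu_{\bpi^{(a)}},\mu_{\btheta})=(1-a)\,W_p(\delta_{\bar\theta},\mu_{\btheta})$, and hence $D_p^W=a$. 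This monotone commutation is the one point that needs care. It is also exactly why the calibration fails for $a<0$ and why the statement restricts to $0\le a\le 1$. The only delicate part is verifying the commutation with the left-continuous generalized inverse, which is a standard fact.

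The remaining properties are short:
\begin{itemize}
\item[-] \emph{Rescaling invariance.} Under $x\mapsto cx$ with $c>0$, quantiles scale by $c$ and $\bar\theta$ scales by $c$. Both Wasserstein distances are therefore multiplied by $c$, and the ratio is unchanged.
\item[-] \emph{Relabelling invariance.} $\mu_{\bx}$ depends only on the multiset of pairs $(w_i,x_i)$, and $\bar\theta$ is permutation invariant.
\item[-] \emph{Continuity.} The map $\bpi\mapsto W_p(\mu_{\bpi},\mu_{\btheta})$ is $1$-Lipschitz for $\|\cdot\|_{p,\bw}$, since the identity-index coupling gives $W_p(\mu_{\bpi},\mu_{\bpi'})\le\|\bpi-\bpi'\|_{p,\bw}$. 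The denominator is continuous in $(\btheta,\bw)$ through its closed form, and it is bounded away from zero near non-degenerate $\btheta$, so the quotient is continuous.
\end{itemize}

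For the characterizations:
\begin{itemize}
\item[-] \eqref{eq:wasserstein-one-characterization} follows because $W_p$ is a metric on measures with finite $p$th moments, so $W_p(\mu_{\bpi},\mu_{\btheta})=0$ if and only if $\mu_{\bpi}=\mu_{\btheta}$.
\item[-] \eqref{eq:wasserstein-zero-characterization} is a direct rearrangement of the definition.
\item[-] The bound $D_p^W\le1$ holds because $W_p\ge0$.
\end{itemize}

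To show that the value zero does not force pooling, I would give an explicit two-point example: equal weights, $\btheta=(0,2)$, so $\bar\theta=1$ and the denominator is $1$, together with the distribution-matching but badly allocated premium $\bpi=(2,0)$. Here $\mu_{\bpi}=\mu_{\btheta}$, so this $\bpi$ has $D_p^W=1$. A zero at a non-pooled tariff comes instead from a reflected spread: $\bpi=(-1,3)$ has $W_p=1$, so $D_p^W=0$ with $\mu_{\bpi}\neq\delta_1$. If one prefers nonnegative premiums, shift everything by a constant, which leaves all differences unchanged.

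Possible negativity is shown by $\bpi=(-2,4)$. Here $W_p=2$, so $D_p^W=-1$; again a shift of the constants gives an example with positive premiums if desired.
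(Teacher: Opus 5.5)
Your proposal is correct and follows the paper's own route. The canonical-path identity comes from the monotone commutation $Q_{\mu_{\bpi^{(a)}}}=(1-a)\bar\theta+aQ_{\mu_{\btheta}}$ inside the quantile formula, and the remaining properties and characterizations follow from the metric properties of $W_p$ and the relabelling invariance of $\mu_{\bx}$. Your explicit two-point examples against $\btheta=(0,2)$, namely $\bpi=(-1,3)$ giving $D_p^W=0$ and $\bpi=(-2,4)$ giving $D_p^W=-1$, supply the non-implication and possible-negativity claims for which the paper gives no Wasserstein example.
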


The canonical-path identity follows from
\[
Q_{\mu_{\bpi^{(a)}}}(u)
=
(1-a)\bar\theta+aQ_{\mu_{\btheta}}(u),
\]
which gives
\[
W_p(\mu_{\bpi^{(a)}},\mu_{\btheta})
=
(1-a)W_p(\delta_{\bar\theta},\mu_{\btheta}).
\]
Unlike the coupled characterization
\eqref{eq:coupled-one-characterization}, equation~\eqref{eq:wasserstein-one-characterization} does not require the benchmark premium values to be assigned to the same policies.

\subsection{Allocation mismatch and values outside the canonical range}
\label{subsec:mismatch}

Pairing $\pi_i$ with $\theta_i$ policy by policy gives one admissible transport plan between $\mu_{\bpi}$ and $\mu_{\btheta}$. Wasserstein distance chooses the least costly plan. This immediately orders the two coordinates.

\begin{proposition}[Coupled and marginal profiling]
\label{prop:mismatch}
For every $p\geq1$,
\[
W_p(\mu_{\bpi},\mu_{\btheta})
\leq
\|\bpi-\btheta\|_{p,\bw},
\]
and therefore
\begin{equation}
\label{eq:ordering-coordinates}
D_p^W(\bpi;\btheta,\bw)
\geq
D_p^{\rm cpl}(\bpi;\btheta,\bw).
\end{equation}
Define the normalized allocation mismatch
\begin{equation}
\label{eq:mismatch}
M_p(\bpi;\btheta,\bw)
=
\frac{
\|\bpi-\btheta\|_{p,\bw}^p
-
W_p^p(\mu_{\bpi},\mu_{\btheta})
}
{\|\bar\theta\boldsymbol 1-\btheta\|_{p,\bw}^p}.
\end{equation}
Then $M_p\geq0$ and
\begin{equation}
\label{eq:mismatch-identity}
M_p(\bpi;\btheta,\bw)
=
\{1-D_p^{\rm cpl}(\bpi;\btheta,\bw)\}^p
-
\{1-D_p^W(\bpi;\btheta,\bw)\}^p.
\end{equation}
In particular,
\[
M_1(\bpi;\btheta,\bw)
=
D_1^W(\bpi;\btheta,\bw)-D_1^{\rm cpl}(\bpi;\btheta,\bw).
\]
Moreover, $M_p=0$ exactly when the observed policy-level pairing attains the Wasserstein minimum.
\end{proposition}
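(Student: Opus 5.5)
The plan is to exhibit the policy-level pairing as a feasible transport plan and then work only with $p$th powers of the two residual costs. Set
\[
\gamma^{\rm id}=\sum_{i=1}^n w_i\,\delta_{(\pi_i,\theta_i)}.
\]
Its first marginal is $\sum_i w_i\delta_{\pi_i}=\mu_{\bpi}$ and its second is $\mu_{\btheta}$, so $\gamma^{\rm id}\in\Pi(\mu_{\bpi},\mu_{\btheta})$. Its cost is
\[
\int|x-y|^p\,d\gamma^{\rm id}=\sum_i w_i|\pi_i-\theta_i|^p=\|\bpi-\btheta\|_{p,\bw}^p.
\]
The infimum in \eqref{eq:wasserstein-definition} is at most this value. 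Taking $p$th roots, which is monotone on $[0,\infty)$, gives $W_p(\mu_{\bpi},\mu_{\btheta})\le\|\bpi-\btheta\|_{p,\bw}$.

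For \eqref{eq:ordering-coordinates}, I will check that the two denominators coincide. The pooling measure $\delta_{\bar\theta}$ has exactly one coupling with $\mu_{\btheta}$, namely the product measure $\delta_{\bar\theta}\otimes\mu_{\btheta}$. Hence
\[
W_p^p(\delta_{\bar\theta},\mu_{\btheta})=\sum_i w_i|\bar\theta-\theta_i|^p=\|\bar\theta\boldsymbol 1-\btheta\|_{p,\bw}^p.
\]
Call this common value $d$. It is strictly positive by non-degeneracy. Then $1-D_p^{\rm cpl}=\|\bpi-\btheta\|_{p,\bw}/d$ and $1-D_p^W=W_p(\mu_{\bpi},\mu_{\btheta})/d$, so the inequality of the first step yields $D_p^W\ge D_p^{\rm cpl}$.

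For the mismatch, raise both normalized residuals to the $p$th power. Dividing the numerator of \eqref{eq:mismatch} by $d^p$ gives \eqref{eq:mismatch-identity} directly, and $M_p\ge0$ follows from the first step. When $p=1$ the identity reads $(1-D_1^{\rm cpl})-(1-D_1^W)=D_1^W-D_1^{\rm cpl}$. Finally, $M_p=0$ holds if and only if $\|\bpi-\btheta\|_{p,\bw}^p=W_p^p(\mu_{\bpi},\mu_{\btheta})$, that is, if and only if $\gamma^{\rm id}$ attains the infimum.

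I do not expect a real obstacle; the steps are bookkeeping once $\gamma^{\rm id}$ is in hand. The one point needing care is the final claim. Here ``the observed pairing attains the minimum'' must be read as optimality of the specific plan $\gamma^{\rm id}$, not of some other plan that happens to share its cost. The ``if and only if'' then holds by definition of the infimum. The infimum is in fact attained, by compactness of $\Pi$ for finitely supported measures or by the quantile coupling \eqref{eq:wasserstein-quantile}. Attainment is not needed for the argument above, but it makes the statement meaningful.
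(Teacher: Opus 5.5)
Your proof is correct and follows the same route as the paper's appendix proof: the policy-level pairing $\sum_i w_i\delta_{(\pi_i,\theta_i)}$ is an admissible coupling, and you divide the $p$th-power residual costs by the common denominator to obtain both the ordering and the mismatch identity. Your explicit check that $W_p^p(\delta_{\bar\theta},\mu_{\btheta})=\|\bar\theta\boldsymbol 1-\btheta\|_{p,\bw}^p$ (the only coupling with a Dirac mass is the product measure) fills in a step the paper only asserts when it says the two coordinates have ``the same denominator''.
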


The three quantities therefore have distinct roles. $D_p^{\rm cpl}$ measures policy-level alignment; $D_p^W$ measures marginal monetary alignment; and $M_p$ is the normalized reduction in $p$-cost obtainable by optimally reallocating premium values while keeping the two marginal distributions fixed. For $p=1$, this cost mismatch is numerically the vertical gap between the two coordinates. For $p>1$, it is not simply their difference and need not lie in $[0,1]$.

The mismatch vanishes at both endpoints for different reasons. At full pooling there is no premium dispersion to allocate incorrectly, so the policy-level pairing is automatically optimal. At full profiling, every policy receives its benchmark value. Between these endpoints, a positive mismatch means that some of the tariff's disagreement with the benchmark is attributable to which policies receive which premium values rather than to the marginal distribution itself.

\begin{example}[Right distribution, wrong policies]
\label{ex:reversal}
Let $n=2$, $w_1=w_2=1/2$, and
\[
\btheta=(80,120),
\qquad
\bpi^{\rm pool}=(100,100).
\]
Consider the balanced reversed tariff
\[
\bpi=(120,80).
\]
Because $\mu_{\bpi}=\mu_{\btheta}$, the marginal coordinate is $D_1^W=1$. In contrast,
\[
\|\bpi-\btheta\|_{1,\bw}=40,
\qquad
\|\bpi^{\rm pool}-\btheta\|_{1,\bw}=20,
\]
so
\[
D_1^{\rm cpl}=-1,
\qquad
M_1=2.
\]
The negative coupled value does not mean ``more pooled than full pooling''. It means that the policy-level tariff is farther from the profiling benchmark than the full-pooling tariff. More generally, both coordinates are bounded above by one but can be negative for tariffs that lie beyond the pooling endpoint in the relevant discrepancy. The interval $[0,1]$ is therefore the canonical pooling--profiling range, not the mathematical range of every possible tariff.
\end{example}

The coordinates are defined on finite portfolios, but they have natural population limits. Under standard moment and non-degeneracy conditions, the marginal limit depends only on the marginal laws of premium and benchmark risk, whereas the coupled limit depends on their joint law. Almost-sure and plug-in consistency results are given in Appendices~\ref{app:population-limits} and~\ref{app:estimated-benchmark}.

\section{From the coordinates to an actuarial audit}
\label{sec:audit}

\subsection{Benchmark construction, balancing, and estimation}
\label{subsec:benchmark-construction}

In a synthetic experiment the benchmark may be known. In real data it is estimated. A natural target for a technical pure-premium benchmark is
\[
\theta(x)=\mathbb E(Y\mid X=x),
\]
where $X$ denotes the covariates admitted to the reference model. This notation specifies the audit benchmark; it does not assert that a unique intrinsic individual risk is observable.

When the benchmark is fitted on the same portfolio used for evaluation, out-of-sample construction limits optimistic alignment. With $K$-fold cross-fitting, let $k(i)$ denote the fold containing policy $i$, fit the benchmark learner without that fold, and set
\[
\widehat\theta_i
=
\widehat m^{(-k(i))}(X_i).
\]
The vector $\widehat\btheta=(\widehat\theta_1,\ldots,\widehat\theta_n)$ is then fully out of fold. The same principle applies to candidate tariffs estimated from the data: their evaluated premium vectors should be generated out of sample whenever the empirical design permits it \citep{Bach2024,CuckerZhou2007,WuthrichMerz2023}.

Let $\bpi^{\rm raw}$ denote a candidate premium vector before balancing and write
\[
\bar\pi^{\rm raw}=\sum_iw_i\pi_i^{\rm raw}.
\]
For positive technical premiums, our default is multiplicative balancing,
\begin{equation}
\label{eq:multiplicative-balance}
\pi_i
=
\pi_i^{\rm raw}\frac{\bar\theta}{\bar\pi^{\rm raw}},
\end{equation}
which preserves premium ratios. Additive balancing,
\[
\pi_i=\pi_i^{\rm raw}-\bar\pi^{\rm raw}+\bar\theta,
\]
preserves absolute monetary differences but can produce negative premiums. We therefore use it only as a sensitivity check when positivity is preserved.

The main analysis concerns technical pure premiums. A commercial premium may also contain expenses, commissions, capital or profit loadings, taxes and market adjustments. Applying the same coordinates to a balanced commercial tariff is possible, but then the measured differentiation combines risk classification with these additional pricing layers. Appendix~\ref{app:commercial-premiums} discusses this distinction and a corresponding commercial re-profiling diagnostic.

Because the scale is benchmark-relative, the reference should be reported transparently and varied when plausible alternatives exist. A benchmark ladder such as
\[
\widehat\btheta^{\rm GLM},
\qquad
\widehat\btheta^{\rm GAM},
\qquad
\widehat\btheta^{\rm XGB}
\]
shows which conclusions survive changes in the accepted profiling frontier. The coordinates are stable to small perturbations of a non-degenerate benchmark; finite-portfolio bounds and plug-in consistency results are given in Appendices~\ref{sup:benchmark-stability} and~\ref{app:estimated-benchmark}. Sampling uncertainty can be studied by portfolio bootstrap, full refit bootstrap or repeated cross-fitting. In the motor application, we use a benchmark ladder in the main text and report a conditional multiplier-bootstrap analysis in Appendix~\ref{app:bootstrap-uncertainty}.

\subsection{Gini dispersion as a companion profiling measure}
\label{subsec:gini}

The classical Gini coefficient provides a familiar actuarial summary of
premium dispersion \citep{FreesMeyersCummings2011,FreesMeyersCummings2013}.
It is naturally related to the pooling--profiling scale, although it contains
less information than either of the two distance-based coordinates.

For a non-negative portfolio vector
$\bx=(x_1,\ldots,x_n)$ with positive exposure-weighted mean
$\bar x=\sum_i w_i x_i$, define the exposure-weighted Gini coefficient by
\begin{equation}
\label{eq:weighted-gini}
\mathcal G_{\bw}(\bx)
=
\frac{
\sum_i\sum_j w_iw_j|x_i-x_j|
}{
2\bar x
}.
\end{equation}
For a tariff $\bpi$ balanced to the benchmark mean $\bar\theta$, define the
benchmark-relative Gini ratio
\begin{equation}
\label{eq:gini-ratio}
D^{\rm G}(\bpi;\btheta,\bw)
=
\frac{\mathcal G_{\bw}(\bpi)}
     {\mathcal G_{\bw}(\btheta)},
\end{equation}
whenever the benchmark is non-degenerate.

We call $D^{\rm G}$ a ratio rather than a third pooling--profiling
coordinate. It records how much of the benchmark's relative dispersion is
present in the tariff, but does not measure the full discrepancy between the
two marginal distributions and contains no information about which policies
receive which premium values. It nevertheless shares several structural
properties with the two main coordinates.

\begin{proposition}[Properties of the benchmark-relative Gini ratio]
\label{prop:gini-properties}
Let $\btheta$ be non-constant with $\bar\theta>0$, and let $\bpi$ be a
non-negative tariff balanced to the same mean. Then $D^{\rm G}$ satisfies:

\begin{itemize}
    \item[-] endpoint normalization,
    \[
    D^{\rm G}(\bar\theta\boldsymbol 1;\btheta,\bw)=0,
    \qquad
    D^{\rm G}(\btheta;\btheta,\bw)=1;
    \]

    \item[-] canonical-path calibration,
    \[
    D^{\rm G}\bigl(
    (1-a)\bar\theta\boldsymbol 1+a\btheta;
    \btheta,\bw
    \bigr)
    =a,
    \qquad 0\leq a\leq1;
    \]

    \item[-] invariance under a common positive monetary rescaling;

    \item[-] invariance under joint relabelling of portfolio values and
    exposure weights;

    \item[-] continuity away from benchmark degeneracy.
\end{itemize}

Moreover,
\begin{equation}
\label{eq:gini-zero-characterization}
D^{\rm G}(\bpi;\btheta,\bw)=0
\quad\Longleftrightarrow\quad
\bpi=\bar\theta\boldsymbol 1
\end{equation}
for policies with positive exposure weight, whereas
\begin{equation}
\label{eq:gini-one-characterization}
D^{\rm G}(\bpi;\btheta,\bw)=1
\quad\Longleftrightarrow\quad
\mathcal G_{\bw}(\bpi)=\mathcal G_{\bw}(\btheta).
\end{equation}
The latter condition does not imply either
$\mu_{\bpi}=\mu_{\btheta}$ or $\bpi=\btheta$.

Finally,
\begin{equation}
\label{eq:gini-wasserstein-bound}
\left|
\mathcal G_{\bw}(\bpi)
-
\mathcal G_{\bw}(\btheta)
\right|
\leq
\frac{
W_1(\mu_{\bpi},\mu_{\btheta})
}{
\bar\theta
}.
\end{equation}
Equivalently,
\begin{equation}
\label{eq:gini-coordinate-bound}
\frac{
|D^{\rm G}(\bpi;\btheta,\bw)-1|
}{
\kappa_\theta
}
\leq
1-D_1^W(\bpi;\btheta,\bw)
\leq
1-D_1^{\rm cpl}(\bpi;\btheta,\bw),
\end{equation}
where
\begin{equation}
\label{eq:kappa-gini}
\kappa_\theta
=
\frac{
W_1(\delta_{\bar\theta},\mu_{\btheta})
}{
\bar\theta\,\mathcal G_{\bw}(\btheta)
}
\in[1,2].
\end{equation}
\end{proposition}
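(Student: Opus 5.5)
The structural properties follow from how the weighted Gini transforms, so I would treat them first and save the Wasserstein bound, which carries the real content, for last.

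\emph{Structural properties.} Write the numerator as the exposure-weighted mean absolute difference
\[
\Delta_{\bw}(\bx)=\sum_i\sum_j w_iw_j|x_i-x_j|,
\qquad\text{so that}\qquad
\mathcal G_{\bw}(\bx)=\Delta_{\bw}(\bx)/(2\bar x).
\]
\begin{itemize}
\item[-] Canonical path. For $\bpi^{(a)}$ we have $\pi^{(a)}_i-\pi^{(a)}_j=a(\theta_i-\theta_j)$ and the mean is unchanged. Hence $\mathcal G_{\bw}(\bpi^{(a)})=a\,\mathcal G_{\bw}(\btheta)$, which gives calibration and both endpoints at once.
\item[-] Rescaling. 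Scaling all values by $c>0$ multiplies $\Delta_{\bw}$ and $\bar x$ by $c$, so $\mathcal G_{\bw}$ is unchanged.
\item[-] Relabelling. A joint permutation of values and weights permutes the double sum, so $\mathcal G_{\bw}$ is unchanged.
\item[-] Continuity. $\Delta_{\bw}$ is continuous, and the denominator $\mathcal G_{\bw}(\btheta)$ is positive when $\btheta$ is non-degenerate and $\bar\theta>0$.
\item[-] Zero characterization. $D^{\rm G}=0$ iff $|\pi_i-\pi_j|=0$ for all $i,j$ with $w_iw_j>0$. Balance then forces the common value to be $\bar\theta$.
\item[-] One characterization. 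This is immediate from the definition. For the non-implication, one counterexample suffices, for example a reversed or differently shaped vector with the same Gini, such as Example~\ref{ex:reversal}, which has the same Gini but $\bpi\neq\btheta$.
\end{itemize}

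\emph{Wasserstein bound.} This is the main step. Since $\bpi$ and $\btheta$ have the same mean $\bar\theta$, it suffices to show
\[
|\Delta_{\bw}(\bpi)-\Delta_{\bw}(\btheta)|\le 2W_1(\mu_{\bpi},\mu_{\btheta}).
\]
I would use the quantile representation
\[
\Delta(\mu)=\mathbb E|X-X'|=\int_0^1\!\int_0^1|Q_\mu(u)-Q_\mu(v)|\,du\,dv,
\]
with $X,X'$ independent. This holds because $(Q_\mu(U),Q_\mu(V))$ with $U,V$ independent uniforms has the law of an independent pair. The reverse triangle inequality then gives
\[
\bigl||Q_\pi(u)-Q_\pi(v)|-|Q_\theta(u)-Q_\theta(v)|\bigr|\le|Q_\pi(u)-Q_\theta(u)|+|Q_\pi(v)-Q_\theta(v)|.
\]
Integrating over $(u,v)$ yields $2W_1$ by \eqref{eq:wasserstein-quantile}. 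Dividing by $2\bar\theta$ gives \eqref{eq:gini-wasserstein-bound}. An alternative route is to take an optimal coupling $(X,Y)$ with an independent copy $(X',Y')$; it gives the same bound.

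\emph{Coordinate form and the range of $\kappa_\theta$.} Divide \eqref{eq:gini-wasserstein-bound} by $\mathcal G_{\bw}(\btheta)$ and write $W_1(\mu_{\bpi},\mu_{\btheta})=(1-D_1^W)\,W_1(\delta_{\bar\theta},\mu_{\btheta})$. This gives the first inequality in \eqref{eq:gini-coordinate-bound} with the stated $\kappa_\theta$. The second inequality is Proposition~\ref{prop:mismatch}.

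For $\kappa_\theta\in[1,2]$, write $W_1(\delta_{\bar\theta},\mu_{\btheta})=\mathbb E|X-\bar\theta|$ (the mean absolute deviation), with $X,X'$ independent copies drawn from $\mu_{\btheta}$. Then $\bar\theta\,\mathcal G_{\bw}(\btheta)=\tfrac12\mathbb E|X-X'|$, so
\[
\kappa_\theta=\frac{2\,\mathbb E|X-\bar\theta|}{\mathbb E|X-X'|}.
\]
\begin{itemize}
\item[-] Upper bound. The triangle inequality through $\bar\theta$ gives $\mathbb E|X-X'|\le 2\,\mathbb E|X-\bar\theta|$, so $\kappa_\theta\ge1$.
\item[-] Lower bound. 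Conditional Jensen on $X'$ gives $\mathbb E|X-X'|\ge\mathbb E|X-\bar\theta|$, so $\kappa_\theta\le2$.
\end{itemize}

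I expect the main obstacle to be the quantile double-integral representation of $\Delta$, together with checking that the constants and orientation of the $\kappa_\theta$ bounds come out exactly as stated. Everything else is bookkeeping.
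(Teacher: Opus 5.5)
Your proof is correct. For the structural properties and for $\kappa_\theta\in[1,2]$ it follows the paper's argument: the same canonical-path computation, and the same triangle-inequality and conditional-Jensen bounds on $\mathbb E|X-X'|$. Your headings ``Upper bound'' and ``Lower bound'' refer to $\mathbb E|X-X'|$ rather than to $\kappa_\theta$, but the conclusions $\kappa_\theta\geq1$ and $\kappa_\theta\leq2$ are right.

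For the Wasserstein bound you take a different route from the paper.
\begin{itemize}
\item[] \emph{The paper's route.} It treats the Gini coefficient as a linear functional of the quantile function, $\mathcal G_{\bw}(\bx)=\bar x^{-1}\int_0^1(2u-1)Q_{\bx}(u)\,du$. Under a common mean, $\mathcal G_{\bw}(\bpi)-\mathcal G_{\bw}(\btheta)$ then equals $\bar\theta^{-1}\int_0^1(2u-1)\{Q_{\bpi}(u)-Q_{\btheta}(u)\}\,du$, and it concludes with $|2u-1|\leq1$.
\item[] \emph{Your route.} You keep the numerator as the mean absolute difference $\mathbb E|X-X'|$ and write it as a double integral over independent ranks. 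The reverse triangle inequality then shows it is $2$-Lipschitz in $W_1$.
\end{itemize}
Your representation is immediate. The paper states its $(2u-1)$ identity without proof; that identity in fact follows from your double integral using monotonicity of the quantile function. Your coupling variant also does not rely on the one-dimensional quantile structure at all. In exchange, the paper's linear route gives for free the sharper intermediate bound $\bar\theta^{-1}\int_0^1|2u-1|\,|Q_{\bpi}(u)-Q_{\btheta}(u)|\,du$. That bound shows that quantile discrepancies near the median barely move the Gini coefficient.

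One small repair is needed. Example~\ref{ex:reversal} only shows that equal Gini does not imply $\bpi=\btheta$. It cannot show that equal Gini fails to imply $\mu_{\bpi}=\mu_{\btheta}$, because in that example the two marginals coincide. A single example with equal Gini and different marginals covers both claims. Take $\bw=(1/4,3/4)$, $\btheta=(175,75)$ and $\bpi=(25,125)$. Both are non-negative, both have mean $100$, and both have $\sum_i\sum_jw_iw_j|x_i-x_j|=37.5$, yet $\mu_{\bpi}\neq\mu_{\btheta}$.
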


The proof is given in Appendix~\ref{sup:canonical-diagnostics}. The canonical-path identity is
particularly useful. Along
\[
\bpi^{(a)}
=
(1-a)\bar\theta\boldsymbol 1+a\btheta,
\]
the three quantities satisfy
\begin{equation}
\label{eq:gini-three-calibrations}
D^{\rm G}(\bpi^{(a)})
=
D_1^W(\bpi^{(a)})
=
D_1^{\rm cpl}(\bpi^{(a)})
=
a.
\end{equation}
Thus $a$ can simultaneously be read as the fraction of benchmark Gini
dispersion reproduced by the tariff, the degree of marginal Wasserstein
profiling, and the degree of policy-level coupled profiling. Away from the
canonical path, however, these three readings separate.

The characterization of their upper endpoints makes the information loss
explicit. For the coupled coordinate,
\[
D_1^{\rm cpl}=1
\quad\Longleftrightarrow\quad
\bpi=\btheta
\]
policy by policy. For the marginal coordinate,
\[
D_1^W=1
\quad\Longleftrightarrow\quad
\mu_{\bpi}=\mu_{\btheta}.
\]
For the Gini ratio, by contrast, $D^{\rm G}=1$ requires only equality of
one scalar dispersion functional. Thus the three conditions successively
discard policy-level allocation and then marginal-distribution information.

Example~\ref{ex:reversal} illustrates the first loss particularly sharply.
For $\btheta=(80,120)$ and the reversed tariff $\bpi=(120,80)$, the two
marginal distributions coincide, so
\[
D^{\rm G}=D_1^W=1,
\]
whereas
\[
D_1^{\rm cpl}=-1,
\qquad
M_1=2.
\]
Neither the classical Gini nor any other purely marginal statistic can
detect this allocation reversal.

Equation~\eqref{eq:gini-wasserstein-bound} also clarifies the relation
between the Gini ratio and the marginal Wasserstein coordinate. A tariff
that is close to the benchmark in $W_1$ must have a Gini coefficient close
to the benchmark Gini. The converse is false: equality of Gini coefficients
places only a scalar restriction on the marginal distributions. In this
sense, $D_1^W$ refines the classical Gini measure by retaining the entire
monetary distribution.

Finally, $D^{\rm G}$ need not lie in $[0,1]$. A balanced tariff can be more
dispersed than the benchmark and therefore have $D^{\rm G}>1$. The interval
$[0,1]$ is the canonical pooling--profiling range for the Gini ratio, just
as it is for the two distance-based coordinates, but not its full
mathematical range. Residual ordered-Gini and Lorenz diagnostics, which
address remaining segmentation opportunities rather than marginal premium
dispersion, are reported in the appendices.

\subsection{Other complementary actuarial diagnostics}
\label{subsec:complementary-diagnostics}

The pooling--profiling coordinates answer a specific question and should be reported alongside, not instead of, standard actuarial diagnostics. Group transfers show who contributes to the benchmark-relative redistribution that remains. For a group attribute $S_i$ and group $s$, define
\begin{equation}
\label{eq:group-transfer}
\Delta_s(\bpi;\btheta,\bw)
=
\frac{\sum_{i:S_i=s}w_i(\pi_i-\theta_i)}
{\sum_{i:S_i=s}w_i}.
\end{equation}
Positive values identify net contributors relative to the profiling benchmark and negative values net beneficiaries. Under portfolio balance, the group transfers average to zero with group exposure shares as weights. If ability-to-pay information is available, premium burden or affordability should be reported separately because access is a different outcome from demutualization.

Gini and ordered-Lorenz diagnostics are complementary for the same reason. Classical Gini measures premium inequality, while ordered Lorenz curves assess ranking and residual segmentation opportunities \citep{FreesMeyersCummings2011,FreesMeyersCummings2013}. Rank-based measures are invariant to monotone transformations and therefore cannot distinguish, for example, a benchmark tariff from a strong shrinkage of that tariff toward the pooled mean. Additional Gini and residual-Lorenz results are reported in the appendices.

Calibration also answers a different question. Let $I$ denote a policy drawn with $\mathbb P(I=i)=w_i$, and define $\Pi=\pi_I$ and $\Theta=\theta_I$. Calibration relative to the benchmark means
\[
\mathbb E(\Theta\mid\Pi)=\Pi.
\]
Both endpoints can satisfy this condition: under full pooling, $\Pi=\bar\theta$ and $\mathbb E(\Theta\mid\Pi)=\bar\theta$; under full profiling, $\Pi=\Theta$. Calibration therefore cannot identify how much benchmark heterogeneity has been monetized into premium differences \citep{DenuitHainautTrufin2019GLM,WuthrichMerz2023}.

A minimum audit should therefore report the benchmark construction, the premium layer being evaluated, the balancing rule, and at least $D_1^{\rm cpl}$, $D_1^W$ and $M_1$. The $p=2$ versions are useful when large policy-level deviations deserve greater weight. Portfolio coordinates should be supplemented by the group-transfer, disparity or affordability diagnostics relevant to the application, together with benchmark sensitivity and an uncertainty assessment.

\subsection{Group parity as a pricing intervention}
\label{subsec:group-parity-mutualization}

A group-parity correction is a useful stress test because it can alter premium allocation while leaving substantial overall dispersion in place. We use a one-dimensional Wasserstein-barycenter correction as an intervention on an existing tariff \citep{FrezalBarry2019,charpentier2023mitigating,charpentier2024insurance,lindholm2024fair}.

For this population statement only, capital letters denote random variables. Let $S\in\{0,1\}$ with $p_s=\mathbb P(S=s)$, and let $Q_s$ be the quantile function of $\Pi\mid S=s$. The barycenter quantile is
\[
Q_B(u)=p_0Q_0(u)+p_1Q_1(u),
\qquad 0<u<1.
\]
If the conditional distributions are continuous, mapping $\Pi$ in group $s$ to $Q_B\{F_s(\Pi)\}$ gives both groups the same conditional premium distribution. With atoms, one can use the randomized distributional transform
\[
U_s=F_s(\Pi^-)+V\{F_s(\Pi)-F_s(\Pi^-)\},
\qquad V\sim\mathrm{Uniform}(0,1),
\]
independently of $(\Pi,S)$, and map to $Q_B(U_s)$. The empirical implementation in Section~\ref{subsec:real-group-parity} uses exposure-weighted within-group midranks.

The transfer implication is immediate. Suppose the corrected tariff is balanced and has the same conditional premium distribution in both groups. The two conditional premium means must then both equal $\bar\theta$, so
\[
\mathbb E(\Pi-\Theta\mid S=s)
=
\bar\theta-\mathbb E(\Theta\mid S=s).
\]
Equalizing conditional premium distributions therefore does not remove a difference in benchmark means between groups; it changes whether that difference appears in premiums or in benchmark-relative transfers.

Along the canonical path this mechanism yields a direct statement about the coupled coordinate. Write $\bar\theta=\mathbb E(\Theta)$ and $\|Z\|_{L^p}=\{\mathbb E|Z|^p\}^{1/p}$. The population coupled coordinate is
\[
D_{p,\infty}^{\rm cpl}(\Pi;\Theta)
=
1-
\frac{\|\Pi-\Theta\|_{L^p}}
{\|\Theta-\mathbb E(\Theta)\|_{L^p}},
\]
whenever the denominator is positive.

\begin{proposition}[Group parity on the canonical path]
\label{prop:parity-reduces-coupled}
Assume $p_s>0$ for $s=0,1$. Let $Q_s^\Theta$ be the quantile function of $\Theta\mid S=s$, and let $U_S$ be a conditional uniform rank such that $\Theta=Q_S^\Theta(U_S)$ almost surely. Define
\[
Q_B^\Theta(u)=p_0Q_0^\Theta(u)+p_1Q_1^\Theta(u).
\]
For $a\in[0,1]$, let
\[
\Pi^{(a)}=(1-a)\bar\theta+a\Theta
\]
and define its barycentric parity correction by
\[
\Pi_{\rm par}^{(a)}
=(1-a)\bar\theta+aQ_B^\Theta(U_S).
\]
Then, for every $p\geq1$,
\[
\|\Pi_{\rm par}^{(a)}-\Theta\|_{L^p}
\geq
\|\Pi^{(a)}-\Theta\|_{L^p}.
\]
Consequently, $D_{p,\infty}^{\rm cpl}$ weakly decreases after the parity correction. For $p=1$, the corrected tariff removes no more of the full-pooling benchmark-relative transfer volume than the original canonical tariff.
\end{proposition}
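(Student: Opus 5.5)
The plan is to reduce everything to a single random variable built from the conditional ranks, and then get the inequality from H\"older duality combined with Jensen's inequality. First I set up notation. Given $S=s$, the rank $U_S$ is uniform on $(0,1)$ and $\Theta=Q_S^\Theta(U_S)$. Put $X=\Theta-\bar\theta=Q_S^\Theta(U_S)-\bar\theta$ and
\[
m(U_S)=Q_B^\Theta(U_S)-\bar\theta=\sum_{s}p_s\{Q_s^\Theta(U_S)-\bar\theta\}.
\]
A direct computation then gives
\[
\Pi^{(a)}-\Theta=-(1-a)X,
\qquad
\Pi^{(a)}_{\rm par}-\Theta=a\,m(U_S)-X .
\]
So the claim is equivalent to
\[
\|X-a\,m(U_S)\|_{L^p}\geq(1-a)\|X\|_{L^p}.
\]
Since $p_s>0$ and $\Theta$ is non-degenerate, we may assume $\|X\|_{L^p}>0$.

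The first key step is to bound $m(U_S)$ by $X$ in $L^p$. Because $U_S$ is uniform given each $S=s$, the law of $U_S$ is uniform on $(0,1)$. Jensen's inequality (convexity of $|\cdot|^p$) applied to the convex combination defining $m$ gives
\[
\mathbb E|m(U_S)|^p=\int_0^1\Bigl|\sum_sp_s\{Q_s^\Theta(u)-\bar\theta\}\Bigr|^p du
\leq\sum_sp_s\int_0^1|Q_s^\Theta(u)-\bar\theta|^pdu=\mathbb E|X|^p .
\]
Hence $\|m(U_S)\|_{L^p}\le\|X\|_{L^p}$. This is the familiar statement that the barycenter is less dispersed around $\bar\theta$ than the mixture.

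The second step is to lower-bound the corrected residual with the dual element of $X$. Take
\[
Y=\operatorname{sgn}(X)\,|X|^{p-1}/\|X\|_{L^p}^{p-1}.
\]
For $p=1$ this means $Y=\operatorname{sgn}(X)$. Then $\|Y\|_{L^q}=1$ with $1/p+1/q=1$, and $\mathbb E(YX)=\|X\|_{L^p}$. By H\"older's inequality,
\[
\|X-a\,m(U_S)\|_{L^p}\geq\mathbb E\{Y(X-a\,m(U_S))\}=\|X\|_{L^p}-a\,\mathbb E\{Y m(U_S)\}.
\]
Applying H\"older again together with the first step,
\[
\mathbb E\{Y m(U_S)\}\le\|m(U_S)\|_{L^p}\le\|X\|_{L^p}.
\]
Since $a\ge 0$, combining the two displays yields $\|X-a\,m(U_S)\|_{L^p}\ge(1-a)\|X\|_{L^p}$, which is the claim.

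The consequences then follow immediately. Both tariffs share the denominator $\|\Theta-\mathbb E\Theta\|_{L^p}$ in $D^{\rm cpl}_{p,\infty}$, so the coupled coordinate weakly decreases. For $p=1$, I would check balance: $\mathbb E\,Q_B^\Theta(U_S)=\sum_sp_s\int_0^1Q_s^\Theta=\bar\theta$, so $\Pi^{(a)}_{\rm par}$ is balanced. The population analogue of Proposition~\ref{prop:transfer-coupled-l1} then gives $T=\tfrac12\|\cdot\|_{L^1}$, and this converts the $L^1$ inequality into the transfer statement.

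The main obstacle is finding the right inequality, because the naive pointwise-in-$u$ convexity argument goes the wrong way. It produces $\int|Q_B-\bar\theta|^p$, which is smaller than $\mathbb E|X|^p$, so it cannot give the lower bound. The duality step is what rescues the argument: it uses Jensen only to control the cross term. For $p=2$ the same idea can be verified by expanding the square, as a sanity check. There the cross term is
\[
\sum_sp_s\int (Q_s^\Theta)^2-\int (Q_B^\Theta)^2\ge0,
\]
and $\|X\|^2\ge\|m\|^2$ then gives the bound.
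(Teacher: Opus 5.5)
Your argument is correct, and it takes a genuinely different route from the paper's. The paper works rank by rank. With $A(u)=Q_B^\Theta(u)-\bar\theta$ and $Z_s(u)=Q_s^\Theta(u)-Q_B^\Theta(u)$, it shows that $h(t)=t^{-p}\sum_sp_s|Z_s+tA|^p$ is nonincreasing in $t$: indeed $h'(t)=-pt^{-p-1}\sum_sp_sZ_s\psi(Z_s+tA)\le0$, because $\sum_sp_sZ_s=0$ and $\psi$ is increasing. It then compares $h(1-a)$ with $h(1)$ and integrates over $u$.

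You instead work globally in $L^p$, combining the Jensen contraction $\|m(U_S)\|_{L^p}\le\|X\|_{L^p}$ with a H\"older-duality lower bound. That duality step is just the reverse Minkowski inequality, so your proof reduces to
\[
\|X-a\,m(U_S)\|_{L^p}\;\ge\;\|X\|_{L^p}-a\|m(U_S)\|_{L^p}\;\ge\;(1-a)\|X\|_{L^p}.
\]
Nothing beyond the triangle inequality is being ``rescued''.

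Your route is two lines and needs no differentiation of $|\cdot|^p$, so there is no subgradient care at $p=1$. It also yields a quantitative bound the paper does not state: $D^{\rm cpl}_{p,\infty}$ falls by at least $a\{1-\|Q_B^\Theta(U_S)-\bar\theta\|_{L^p}/\|\Theta-\bar\theta\|_{L^p}\}$. This is an all-$p$ inequality counterpart of the paper's exact $L^2$ decomposition through $V_S$.

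The paper's route gives two things yours does not. First, the inequality holds pointwise in the rank $u$. Second, because $h$ is monotone, the ratio $\|\Pi^{(a)}_{\rm par}-\Theta\|_{L^p}/\|\Pi^{(a)}-\Theta\|_{L^p}$ is nondecreasing in $a\in[0,1)$, not merely at least one.

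Your explicit check that $\Pi^{(a)}_{\rm par}$ is balanced is needed before invoking the transfer identity for $p=1$, and the paper leaves it implicit.
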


The proof and an exact $L^2$ decomposition are given in Appendix~\ref{sup:parity}. Proposition~\ref{prop:parity-reduces-coupled} concerns policy-level alignment, not marginal dispersion. The marginal Wasserstein coordinate may remain high after the correction because premium values can retain a benchmark-like distribution while being reassigned across policies. Section~\ref{subsec:real-group-parity} shows this mechanism in the motor portfolio.

Finally, the entire framework is a fixed-portfolio audit. Exposures are held fixed, and the coordinates do not model participation, coverage choice, retention or competitive responses. Such responses may be important consequences of a pricing rule, but they require an explicit behavioral or market model rather than a diagnostic defined on a fixed premium vector.

\section{Synthetic illustration}
\label{sec:synthetic}

We first use a controlled portfolio in which the profiled benchmark is known. The experiment checks the endpoints and canonical path, contrasts coupled and marginal profiling, and illustrates how correlated proxies can preserve substantial differentiation after sensitive variables are excluded.

\subsection{Data-generating process}
\label{subsec:dgp}

We simulate $n=50{,}000$ policyholders with equal exposure $e_i=1$, hence $w_i=1/n$. Two binary sensitive attributes are generated as
\[
A_i\sim\mathrm{Bernoulli}(0.5),
\qquad
B_i\mid A_i
\sim
\mathrm{Bernoulli}\{\operatorname{logit}^{-1}(-0.4+1.1A_i)\}.
\]
Four rating variables are then generated. $X_1$ is an age-like variable, $X_2$ is a continuous risk characteristic correlated with the sensitive attributes, and $X_3$ and $X_4$ are constructed as increasingly strong proxy-type covariates:
\[
X_{1i}=40+10A_i+6B_i+\varepsilon_{1i},
\]
truncated to $[18,80]$,
\[
X_{2i}=0.6A_i+0.3B_i+\varepsilon_{2i},
\qquad
X_{3i}=0.8B_i+0.3A_i+\varepsilon_{3i},
\]
and
\[
X_{4i}=0.5X_{3i}+0.4B_i+\varepsilon_{4i}.
\]
The noise terms are independent centered Gaussian variables, and $X_2,X_3,X_4$ are standardized. The labels are deliberately generic because the experiment is designed to isolate pricing mechanisms rather than reproduce a specific insurance product; $X_3$ and $X_4$ act as strong proxies for $A$ and $B$.

Claim counts follow
\[
N_i\mid \bX_i,A_i,B_i\sim\mathrm{Poisson}(\lambda_i),
\]
with
\[
\log\lambda_i
=
-2.80
+0.010(X_{1i}-40)
+0.25X_{2i}
+0.35X_{3i}
+0.45X_{4i}
+0.20A_i
+0.10B_i
+0.25X_{3i}B_i.
\]
Claim severity is fixed at $c_0=1000$. Hence $Y_i=c_0N_i$ and the oracle pure premium is
\[
\theta_i=\mathbb E(Y_i\mid \bX_i,A_i,B_i)=c_0\lambda_i.
\]

\subsection{Pricing regimes and metrics}
\label{subsec:pricing-regimes}

We compare eight regimes, all balanced to the mean of $\btheta$. The pooled tariff is $\pi_i^{\mathrm{pool}}=\bar\theta$. The coarse tariff charges empirical means within broad bands formed from $X_1$, $X_2$ and $X_3$. The proxy GLM uses $X_1,\ldots,X_4$ but not $A$ or $B$; the full GLM adds both sensitive variables and the interaction $X_3B$. We also consider
\[
\pi_i^{\mathrm{shrink}}
=(1-\rho)\bar\theta+\rho\pi_i^{\mathrm{full}},
\qquad \rho=0.60,
\]
and the capped tariff
\[
\pi_i^{\mathrm{cap}}
=\min\{2\bar\theta,\max(0.5\bar\theta,\pi_i^{\mathrm{full}})\}.
\]
The no-proxy GLM uses only $X_1$ and $X_2$, while $\pi_i^{\mathrm{oracle}}=\theta_i$ defines the profiling endpoint.

For each tariff we report the coupled and marginal coordinates, the allocation mismatch, and selected dispersion and tail summaries. By Proposition~\ref{prop:transfer-coupled-l1}, $D_1^{\mathrm{cpl}}$ is also the fraction of pooling-induced transfer volume removed. Figures~\ref{fig:synthetic-quantiles} and~\ref{fig:synthetic-transfers} and Table~\ref{tab:synthetic-metrics-compact} summarize the main results. Residual ordered-Lorenz and complete Gini diagnostics are reported in the appendices.

\begin{figure}[t]
\centering
\includegraphics[width=.82\textwidth]{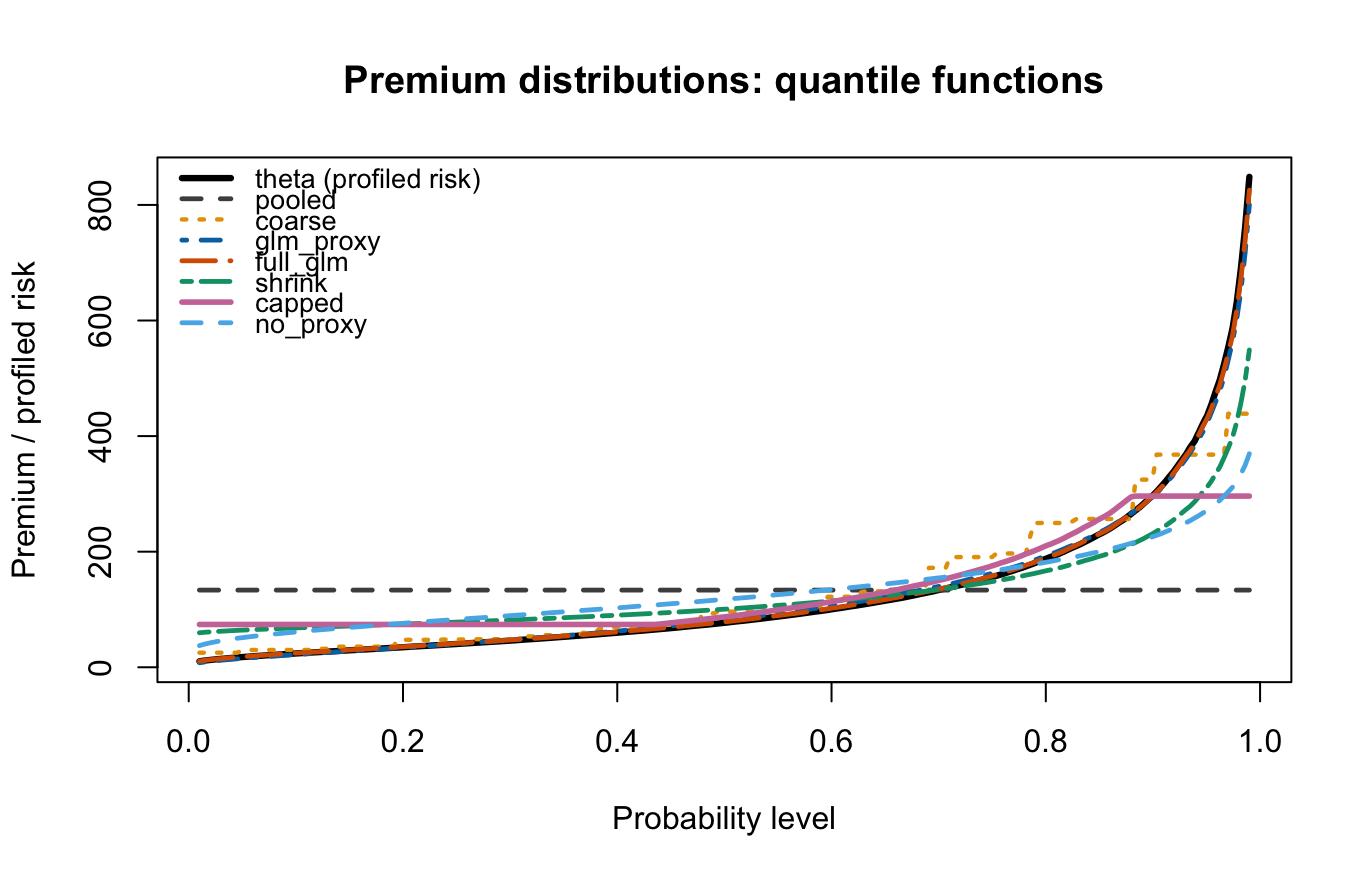}
\caption{Empirical premium quantile functions. Shrinkage and capping compress the benchmark distribution toward the pooled mean.}
\label{fig:synthetic-quantiles}
\end{figure}

\begin{figure}[t]
\centering
\includegraphics[width=.82\textwidth]{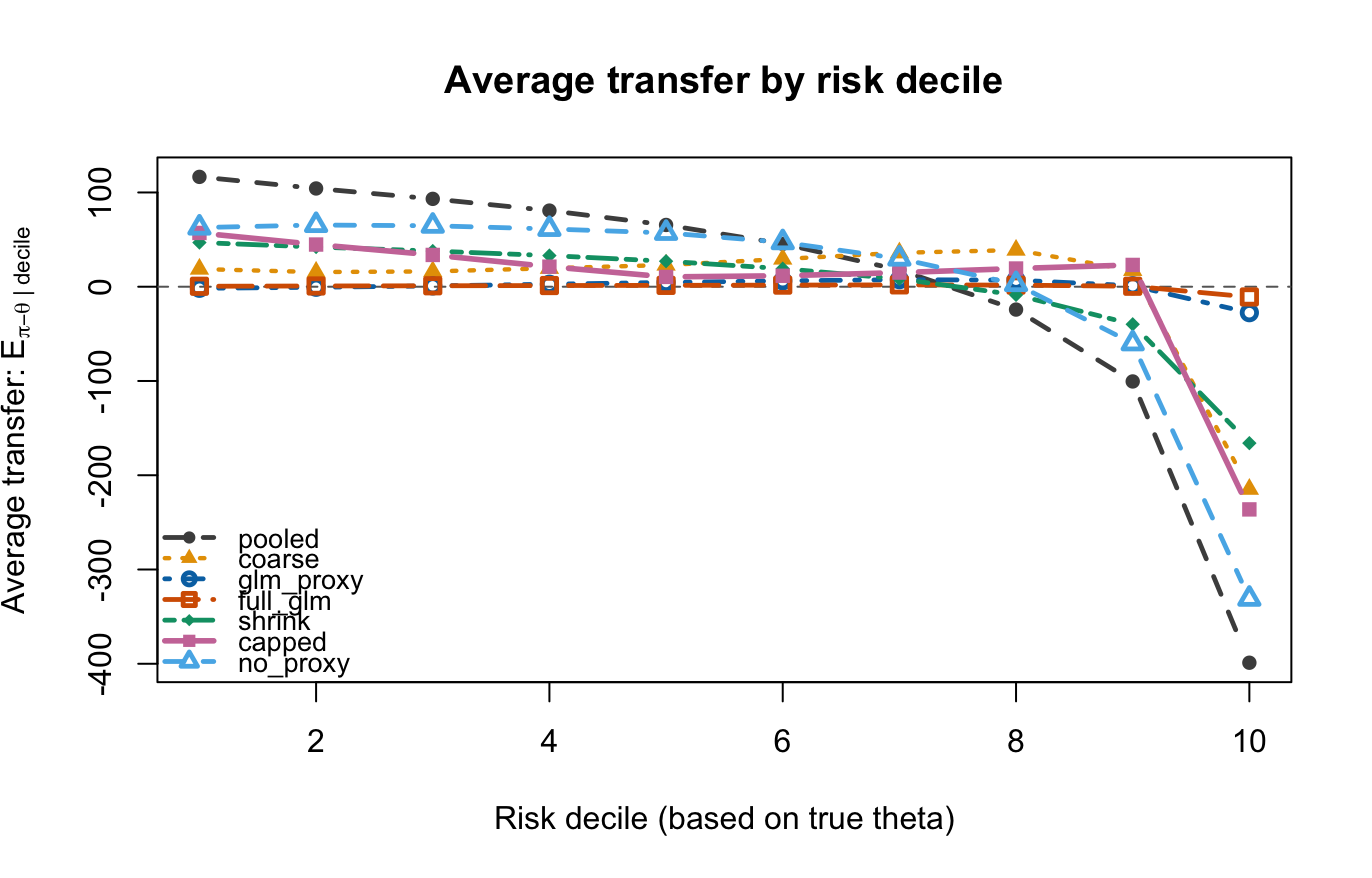}
\caption{Average benchmark-relative transfers by risk decile. Full pooling redistributes from low- to high-risk deciles, whereas the full GLM nearly eliminates these transfers.}
\label{fig:synthetic-transfers}
\end{figure}

\begin{table}[t]
\centering
\caption{Pooling--profiling metrics for the synthetic portfolio. The exact identity $\overline M_1=D_1^W-D_1^{\mathrm{cpl}}$ gives the normalized $L^1$ allocation mismatch; $q_{.95}$ is the 95th premium quantile.}
\label{tab:synthetic-metrics-compact}
\small
\begin{tabular}{lrrrrrrr}
\toprule
Pricing rule
& $D_1^{\mathrm{cpl}}$
& $D_2^{\mathrm{cpl}}$
& $D_1^W$
& $D_2^W$
& $\overline M_1$
& $\operatorname{sd}(\bpi)$
& $q_{.95}$ \\
\midrule
Pooled          & 0.000 & 0.000 & 0.000 & 0.000 & 0.000 &   0.000 & 133.539 \\
Coarse tariff   & 0.392 & 0.238 & 0.703 & 0.423 & 0.311 & 113.041 & 367.923 \\
GLM with proxies& 0.863 & 0.836 & 0.950 & 0.927 & 0.087 & 165.197 & 420.717 \\
Full GLM        & 0.961 & 0.949 & 0.979 & 0.965 & 0.018 & 170.208 & 428.463 \\
Shrinkage       & 0.590 & 0.581 & 0.591 & 0.582 & 0.001 & 102.180 & 310.589 \\
Capped tariff   & 0.549 & 0.292 & 0.549 & 0.292 & 0.000 &  79.639 & 296.200 \\
No-proxy GLM    & 0.145 & 0.086 & 0.473 & 0.355 & 0.328 &  70.978 & 271.291 \\
Oracle          & 1.000 & 1.000 & 1.000 & 1.000 & 0.000 & 175.394 & 433.418 \\
\bottomrule
\end{tabular}
\end{table}

\subsection{Results}
\label{subsec:synthetic-results}

The pooled and oracle tariffs attain the normalized endpoints. The shrinkage tariff provides a more informative calibration check: with $\rho=0.60$, the four principal values range from $0.581$ to $0.591$, close to the imposed location on the canonical path. Small deviations arise because shrinkage is applied to the fitted full GLM rather than directly to the oracle benchmark.

The coupled and marginal views separate sharply for coarse and no-proxy tariffs. The coarse tariff has $D_1^{\mathrm{cpl}}=0.392$ but $D_1^W=0.703$: its broad cells create substantial portfolio-level dispersion while leaving considerable allocation error within cells. The no-proxy GLM displays an even larger gap, $0.145$ versus $0.473$. By contrast, the proxy GLM remains close to profiling despite excluding $A$ and $B$, because $X_3$ and $X_4$ recover much of their risk information. Excluding sensitive variables therefore does not mechanically restore pooling when strong proxies remain.

The capped tariff shows why both $p=1$ and $p=2$ are useful. Its profiling level is about $0.55$ in $L^1$ but only $0.29$ in $L^2$, because capping acts mainly in the tails. Shrinkage, by comparison, preserves ranking while compressing monetary differences throughout the distribution. The comparison illustrates the different loss functions: $p=2$ places more weight on large policy-level deviations than $p=1$. Standard Gini and residual-segmentation diagnostics, reported in Appendices~\ref{sup:canonical-diagnostics} and~\ref{sup:synthetic}, do not provide this same monetary calibration.

Figures~\ref{fig:synthetic-quantiles} and~\ref{fig:synthetic-transfers} give the distributional and redistributive readings. The proxy and full GLM quantiles closely follow the benchmark, the capped tariff flattens its upper tail, and pooling creates the largest transfers across risk deciles. The experiment therefore motivates the reporting convention used below: coupled and marginal coordinates should be shown together, with transfer and tail diagnostics used to explain why they differ.

\section{Real-data application}
\label{sec:real-data}

We apply the pooling--profiling framework to the \texttt{ausprivauto0405} motor portfolio from \texttt{CASdatasets} \citep{DutangCharpentier2024}. The database contains claim counts, claim amounts, exposure, vehicle and driver characteristics, and the binary variable \texttt{Gender}. Because observed commercial premiums are not available, the exercise compares technical pure-premium rules on a fixed portfolio. Exposures are used as weights throughout. Predicted claim frequencies are converted into annual pure premiums using the empirical average claim severity. We use \texttt{Gender} as the grouping attribute $S$ in the parity intervention.

\subsection{Empirical design and reference benchmarks}
\label{subsec:real-design}

All premiums are produced out of sample by five-fold cross-fitting. The main profiling benchmark, denoted $\widehat\btheta$, is a cross-fitted XGBoost frequency model using \texttt{VehValue}, \texttt{VehAge}, \texttt{VehBody}, \texttt{DrivAge} and \texttt{Gender}. Its specification is deliberately more flexible than that of the candidate boosted-tree tariffs: the benchmark uses 600 boosting rounds, maximum depth 4 and learning rate $0.03$, whereas the candidate XGBoost tariffs use 300 rounds, maximum depth 3 and learning rate $0.05$. Both use subsampling and column subsampling rates of $0.8$. Seeds are fixed fold by fold, and the complete fitting specification is provided with the replication code. These hyperparameters are not claimed to be uniquely optimal; they define a reproducible and deliberately flexible reference whose influence is examined through the benchmark ladder in Section~\ref{subsec:real-benchmark-sensitivity}. The benchmark is therefore an operational profiling frontier, not an observed individual risk or a model-free oracle. To assess finite-portfolio uncertainty conditional on the fitted benchmark
and tariff vectors, we use a conditional exponential multiplier bootstrap
with 1,000 replications. Percentile intervals for the principal
pooling--profiling coordinates and for the paired group-parity effects are
reported in the appendices.

We compare seven rules. The \emph{pooled} tariff is constant and equal to the exposure-weighted mean of $\widehat\btheta$. The GLM and XGBoost tariffs are fitted either without $S$ or with $S$, producing \texttt{glm\_noS}, \texttt{glm\_S}, \texttt{xgb\_noS} and \texttt{xgb\_S}. The benchmark endpoint is $\bpi=\widehat\btheta$. Finally, the canonical pooling--profiling rule
\[
\pi_i^{\mathrm{shrink}}
=(1-a)\bar\theta+a\widehat\theta_i,
\qquad a=0.60,
\]
provides a direct calibration check. Every rule is multiplicatively balanced to the exposure-weighted mean of the benchmark before the coordinates are computed. Additive balancing is considered as a sensitivity analysis below.

The principal quantities are the coupled coordinate $D_1^{\mathrm{cpl}}$, the marginal Wasserstein coordinate $D_1^W$, and their exact difference
\[
M_1
=
D_1^W-D_1^{\mathrm{cpl}},
\]
which is the normalized $L^1$ allocation mismatch from Proposition~\ref{prop:mismatch}. Throughout the empirical tables, $M_1$ is computed from the unrounded coordinates; discrepancies of $0.001$ may therefore appear when the three displayed entries are compared after rounding. We also report three group diagnostics. The first is the $W_2$ distance between the two gender-conditional premium distributions. The second is the absolute difference between group-average premiums. The third is the range of group-average benchmark-relative transfers,
\[
\Delta^{\mathrm{tr}}(\bpi)
=
\max_s \Delta_s(\bpi;\widehat\btheta,\bw)
-
\min_s \Delta_s(\bpi;\widehat\btheta,\bw),
\]
with $\Delta_s$ defined in \eqref{eq:group-transfer}. For binary $S$, this is simply the absolute difference between the two group-average transfers.

\begin{table}[t]
\centering
\small
\caption{Pooling--profiling and group diagnostics for the original tariffs. The reference is the cross-fitted XGBoost benchmark. The normalized allocation mismatch satisfies $\overline M_1=D_1^W-D_1^{\mathrm{cpl}}$.}
\label{tab:cas-metrics-compact}
\begin{tabular}{lrrrrrr}
\toprule
Tariff & $D_1^{\mathrm{cpl}}$ & $D_1^W$ & $\overline M_1$ & $W_2$ gap & Premium gap & Transfer gap \\
\midrule
Pooled              & 0.000 & 0.000 & 0.000 & 0.000  & 0.000  & 9.468 \\
GLM, no $S$         & 0.329 & 0.887 & 0.559 & 9.040  & 4.590  & 4.878 \\
GLM, with $S$       & 0.332 & 0.891 & 0.558 & 12.753 & 10.696 & 1.228 \\
XGBoost, no $S$     & 0.661 & 0.808 & 0.147 & 6.905  & 1.968  & 7.500 \\
XGBoost, with $S$   & 0.677 & 0.809 & 0.132 & 10.457 & 7.672  & 1.796 \\
Benchmark shrinkage & 0.600 & 0.600 & 0.000 & 8.760  & 5.681  & 3.787 \\
Benchmark           & 1.000 & 1.000 & 0.000 & 14.601 & 9.468  & 0.000 \\
\bottomrule
\end{tabular}
\end{table}

\begin{figure}[t]
\centering
\includegraphics[width=.82\textwidth]{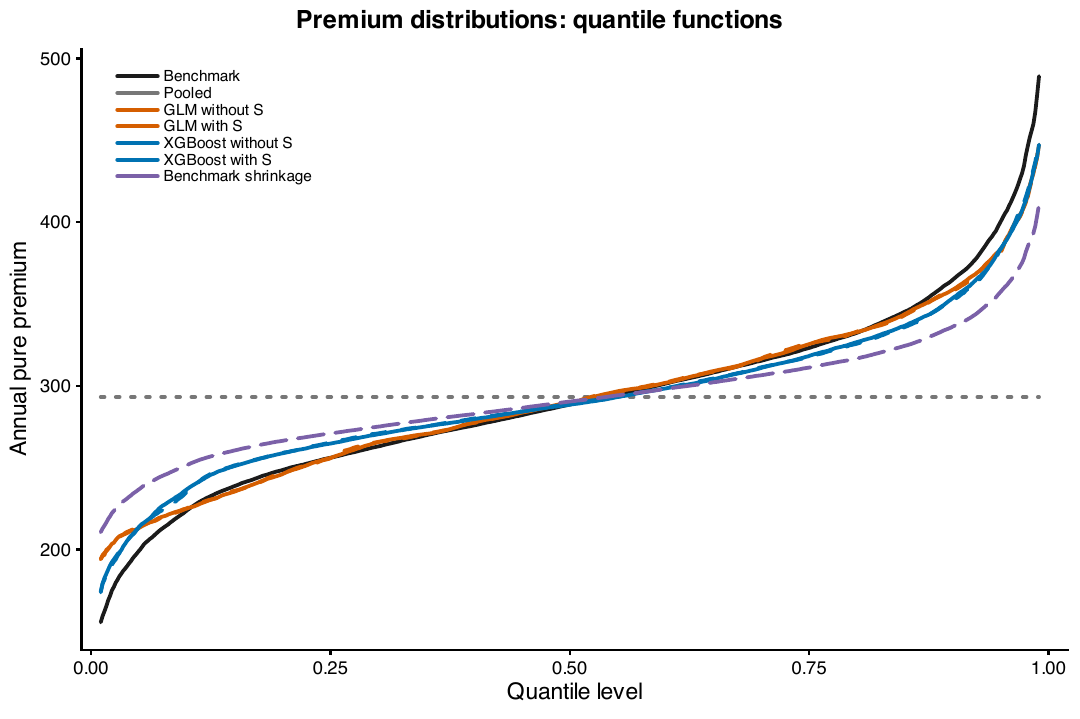}
\caption{Exposure-weighted premium quantile functions. The figure gives the marginal view of profiling: it shows how much premium heterogeneity each rule produces, but not whether that heterogeneity is assigned to the same policies as the benchmark.}
\label{fig:real-premium-quantiles}
\end{figure}

\subsection{Marginal differentiation and policy-level allocation}
\label{subsec:real-main-results}

The endpoints and the canonical path behave as required. The pooled tariff has both coordinates equal to zero, the benchmark has both coordinates equal to one, and benchmark shrinkage returns $D_1^{\mathrm{cpl}}=D_1^W=0.600$. These results provide a direct numerical check of the common normalization.

The fitted tariffs reveal a sharp difference between marginal differentiation and policy-level allocation. Relative to the XGBoost benchmark, the two GLM tariffs have coupled coordinates of $0.320$ and $0.324$ but marginal coordinates of $0.891$ and $0.894$. Their normalized allocation mismatch is therefore about $0.57$. The GLMs generate a premium distribution that is relatively close to the benchmark marginally, while much of that dispersion is attached to different policies.

The XGBoost tariffs display the opposite pattern. Their marginal coordinates, $0.806$ and $0.811$, are somewhat lower than those of the GLMs, but their coupled coordinates rise to $0.655$ without $S$ and $0.680$ with $S$. Their normalized mismatch falls to $0.151$ and $0.131$. Thus the boosted-tree rules are substantially better aligned with benchmark risk at policy level, even though their marginal premium distributions are not the closest in Wasserstein distance. This contrast is the empirical reason to report the two coordinates jointly: premium dispersion alone does not identify who receives the high and low premiums.

Conditional multiplier-bootstrap intervals reported in Appendix~\ref{app:bootstrap-uncertainty} are narrow relative to these contrasts. They confirm that the difference between the GLM tariffs---high marginal profiling but weak policy-level alignment---and the XGBoost tariffs---stronger individual alignment and lower mismatch---is not an artifact of finite-portfolio composition.

Figure~\ref{fig:real-premium-quantiles} shows the marginal component of this result. The pooled rule is constant and the benchmark has the widest tails. The shrinkage rule follows the same ordering while compressing the distribution toward its mean. The fitted GLM and XGBoost curves lie between these endpoints. The figure cannot, however, display the allocation mismatch that separates the GLM and XGBoost results in Table~\ref{tab:cas-metrics-compact}.

Direct use of \texttt{Gender} changes the coupled profiling coordinate only modestly. Adding $S$ increases $D_1^{\mathrm{cpl}}$ from $0.320$ to $0.324$ for the GLM and from $0.655$ to $0.680$ for XGBoost. This does not imply that gender is irrelevant to group disparities. It means only that, conditional on the other variables and on the chosen benchmark, its incremental contribution to policy-level profiling is small. The group diagnostics tell a different story: the premium gap and the conditional $W_2$ gap can remain sizeable even when the variable itself adds little incremental predictive differentiation.

\subsection{Group parity and benchmark-relative transfers}
\label{subsec:real-group-parity}

For each original tariff, we apply the one-dimensional barycentric group-parity correction of Section~\ref{subsec:group-parity-mutualization}. Let $Q_s$ be the exposure-weighted quantile function of the premium values in group $S=s$ and $p_s=\sum_{i:S_i=s}w_i$ the corresponding exposure weight. The barycentric quantile is
\[
Q_B(u)=\sum_s p_sQ_s(u).
\]
Each premium is mapped to $Q_B(u_i)$, where $u_i$ is its exposure-weighted within-group midrank, and the resulting tariff is rebalanced to the common portfolio mean. Because the empirical conditional distributions are discrete and have unequal weights, the deterministic midrank implementation produces approximate rather than exact equality of the two conditional distributions. The remaining $W_2$ gaps should therefore be read as numerical residuals of the empirical mapping, not as a failure of the population construction.

\begin{table}[t]
\centering
\small
\caption{Effect of Wasserstein fairness post-processing. Entries are fair minus original. Negative changes in the two group-gap columns indicate reduced gender disparities; positive changes in the transfer gap indicate compensating benchmark-relative transfers.}
\label{tab:cas-fairness-effects}
\begin{tabular}{lrrrrrr}
\toprule
Tariff & $\Delta D_1^{\mathrm{cpl}}$ & $\Delta D_1^W$ & $\Delta\overline M_1$ & $\Delta W_2$ gap & $\Delta$ premium gap & $\Delta$ transfer gap \\
\midrule
GLM, no $S$         & -0.006 &  0.002 & 0.009 &  -8.631 &  -4.590 & 4.591 \\
GLM, with $S$       & -0.008 &  0.002 & 0.010 & -12.331 & -10.696 & 8.241 \\
XGBoost, no $S$     & -0.004 &  0.000 & 0.004 &  -6.025 &  -1.966 & 1.971 \\
XGBoost, with $S$   & -0.014 & -0.003 & 0.011 &  -9.830 &  -7.670 & 7.674 \\
Benchmark shrinkage & -0.012 & -0.003 & 0.008 &  -7.185 &  -5.674 & 5.688 \\
Benchmark           & -0.117 & -0.009 & 0.108 & -11.974 &  -9.456 & 9.481 \\
\bottomrule
\end{tabular}
\end{table}

The post-processing sharply reduces group premium disparities. Table~\ref{tab:cas-fairness-effects-compact} reports large negative changes in both the conditional $W_2$ gap and the mean premium gap for the tariffs fitted with $S$, and similar reductions occur for the rules fitted without $S$. This confirms that unawareness is not group neutrality: vehicle and driver variables preserve part of the gender-related structure even when the sensitive attribute is excluded from the model.

The pooling--profiling coordinates react asymmetrically. At the benchmark endpoint, the parity-corrected transformation produces a visible reduction in $D_1^{\mathrm{cpl}}$, while $D_1^W$ remains close to one. The normalized mismatch therefore increases: the common marginal distribution remains close to the benchmark distribution, but part of that dispersion is reassigned across policies in order to reduce the group gap. The same mechanism appears, on a smaller scale, for the fitted tariffs. Across those tariffs, the change in the coupled coordinate is systematically more pronounced than the change in the marginal coordinate; Table~\ref{tab:cas-fairness-effects-compact} reports the corresponding differences.

The transfer columns give the corresponding redistributive interpretation. Before correction, the benchmark tariff reproduces the benchmark gender premium gap of about $9.554$ and creates essentially no benchmark-relative group transfer. After post-processing, the premium gap is nearly eliminated while the transfer gap increases correspondingly. For \texttt{xgb\_S}, the original premium gap is about $7.487$ and the original transfer gap about $2.067$; the parity correction again shifts much of the remaining group difference from the premium schedule into benchmark-relative transfers. Conditional distributional parity therefore does not eliminate the underlying benchmark difference. It determines whether that difference appears in premiums or in transfers.

\begin{figure}[t]
\centering
\includegraphics[width=.86\textwidth]{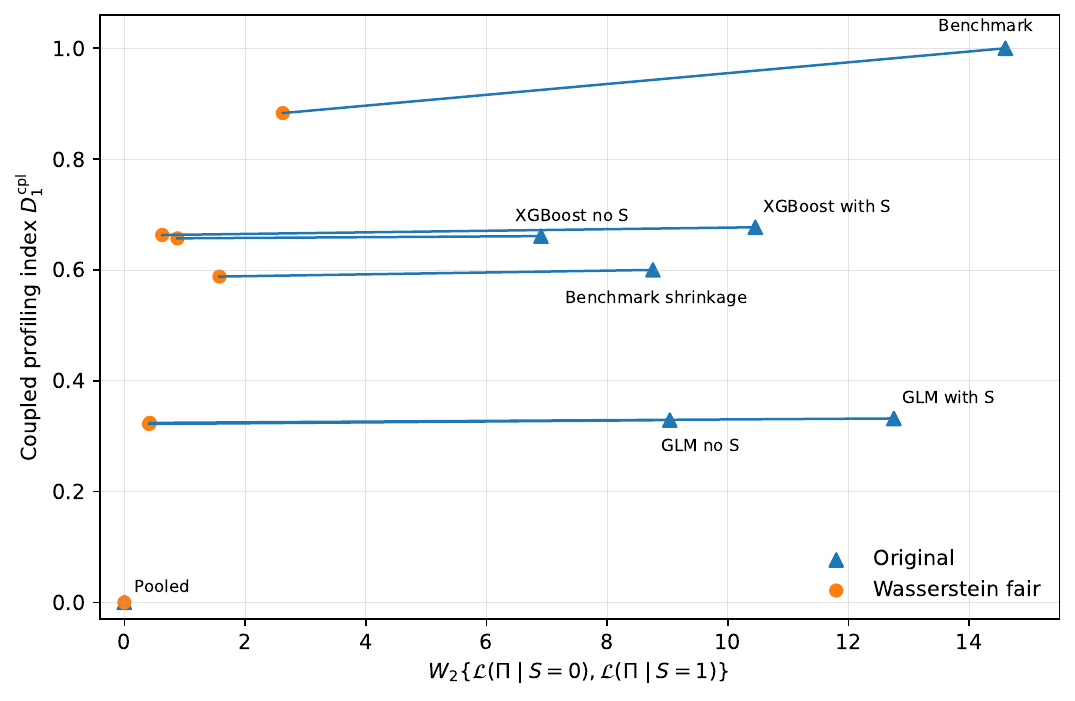}
\caption{Group parity and pricing demutualization. Each segment connects an original tariff to its barycentric parity-corrected version. Moving left indicates a reduction in the gender-conditional premium-distribution gap; moving down indicates a reduction in coupled demutualization and a corresponding increase in residual mutualization.}
\label{fig:real-tradeoff}
\end{figure}

Figure~\ref{fig:real-tradeoff} summarizes the adjustment in two dimensions. Every non-pooled tariff moves sharply to the left. The vertical movement is limited for already imperfect candidate tariffs and largest at the benchmark endpoint, where all group-level benchmark differences initially enter the premium. The diagram distinguishes a reduction in group premium disparity from a reduction in monetary differentiation: the former may be achieved mainly through reallocation rather than through compression of the marginal premium distribution.

\subsection{Dependence on the profiling benchmark}
\label{subsec:real-benchmark-sensitivity}

The numerical coordinates are conditional on the accepted profiling frontier. To assess this dependence, we recompute the coordinates using three cross-fitted benchmarks based on the same covariates: a GLM, a GAM and the more flexible XGBoost reference. Table~\ref{tab:benchmark-sensitivity-compact} reports the results for the four candidate tariffs.

\begin{table}[t]
\centering
\small
\caption{Sensitivity of the pooling--profiling coordinates to the empirical profiling benchmark. Only candidate tariffs are shown.}
\label{tab:benchmark-sensitivity-compact}
\begin{tabular}{llrrr}
\toprule
Benchmark & Tariff & $D_1^{\mathrm{cpl}}$ & $D_1^W$ & $\overline M_1$ \\
\midrule
GLM       & GLM, no $S$       & 0.928 & 0.988 & 0.060 \\
GLM       & GLM, with $S$     & 1.000 & 1.000 & 0.000 \\
GLM       & XGBoost, no $S$   & 0.424 & 0.848 & 0.424 \\
GLM       & XGBoost, with $S$ & 0.429 & 0.852 & 0.422 \\
\addlinespace
GAM       & GLM, no $S$       & 0.745 & 0.945 & 0.200 \\
GAM       & GLM, with $S$     & 0.756 & 0.947 & 0.191 \\
GAM       & XGBoost, no $S$   & 0.504 & 0.821 & 0.317 \\
GAM       & XGBoost, with $S$ & 0.512 & 0.823 & 0.311 \\
\addlinespace
XGBoost reference & GLM, no $S$       & 0.329 & 0.887 & 0.559 \\
XGBoost reference & GLM, with $S$     & 0.332 & 0.891 & 0.558 \\
XGBoost reference & XGBoost, no $S$   & 0.661 & 0.808 & 0.147 \\
XGBoost reference & XGBoost, with $S$ & 0.677 & 0.809 & 0.132 \\
\bottomrule
\end{tabular}
\end{table}

The levels change materially across benchmarks. Relative to the GLM endpoint, the two GLM tariffs have coupled coordinates of $0.928$ and $1.000$, whereas the XGBoost tariffs are around $0.42$. Relative to the GAM endpoint, the GLM coordinates are $0.563$ and $0.573$, while the XGBoost coordinates are $0.421$ and $0.429$. Relative to the XGBoost reference, the ordering reverses: the GLMs fall to $0.320$--$0.324$, while the XGBoost tariffs reach $0.655$--$0.680$. In this implementation, the value $D_1^{\mathrm{cpl}}=D_1^W=1$ for the GLM with $S$ under the GLM benchmark is an endpoint identity by construction: the benchmark is generated by the same cross-fitted GLM specification as that candidate tariff. It should therefore be read as a normalization check rather than as an independent performance result.

This variation is the empirical content of benchmark relativity, not a failure of normalization. A tariff is not intrinsically ``$67\%$ profiled'' independently of the reference model. It is $67\%$ profiled relative to a stated covariate set and modelling protocol. The benchmark ladder therefore serves two purposes. First, it identifies conclusions that are stable, such as the large difference between marginal and coupled profiling for the GLMs. Second, it reveals conclusions that depend on the accepted representation of individual risk, such as the relative proximity of GLM and XGBoost tariffs to the profiling endpoint.

The results are essentially insensitive to the balancing convention. Replacing multiplicative balance by additive balance changes the coupled coordinate by less than $0.0006$ and the marginal Wasserstein coordinate by less than $0.0012$ across the four fitted tariffs. No non-positive premium is produced: the minimum additively balanced premium is approximately $47.6$. Appendix~\ref{app:balancing-sensitivity} reports the full comparison.

Additional residual ordered-Lorenz, local-slope and variable-removal diagnostics are reported in the appendices. They support the same qualitative interpretation without changing the three principal findings above. First, marginal premium differentiation and individual benchmark alignment can diverge sharply. Second, excluding a sensitive attribute does not restore group neutrality or pooling when correlated covariates remain. Third, barycentric group parity converts much of the group premium difference into benchmark-relative transfers while leaving marginal premium differentiation almost unchanged.

\section{Discussion and conclusion}
\label{sec:conclusion}

This paper proposes a benchmark-relative pooling--profiling scale for insurance pricing. The coupled coordinate $D_p^{\rm cpl}$ measures policy-level alignment with a stated benchmark, while the marginal Wasserstein coordinate $D_p^W$ measures whether the tariff reproduces the benchmark's monetary distribution. The inequality $D_p^W\geq D_p^{\rm cpl}$ formalizes the distinction between marginal differentiation and policy-level allocation, while $M_p$ measures the associated residual-cost mismatch. For $p=1$, the coupled coordinate has an exact transfer interpretation: under balance, it is the fraction of the benchmark-relative transfer volume induced by full pooling that the evaluated tariff removes.

The applications show why the distinction matters. Broad classes and GLM tariffs can reproduce much of the benchmark dispersion while retaining substantial allocation mismatch. Tail caps affect $p=2$ more strongly than $p=1$, while shrinkage along the canonical path compresses differentiation throughout the distribution. In the motor portfolio, barycentric group parity sharply reduces gender-conditional premium disparities mainly through reallocation: coupled profiling falls and benchmark-relative transfers increase, whereas the marginal Wasserstein coordinate changes little.

The coordinates are conditional on the benchmark covariates, learner, out-of-sample protocol and balancing convention. They are fixed-portfolio quantities and do not incorporate participation, coverage or competitive responses. They also do not identify an optimal degree of mutualization or determine which rating factors are legitimate. Their purpose is narrower: to make observable how much benchmark heterogeneity is translated into prices, how much remains pooled, and whether the resulting monetary differentiation is assigned to the same policies as under the stated benchmark.



\section*{Author Contributions}
Conceptualization: 
A.C. and L.B.; 
methodology: A.C. and L.B.;
formal analysis: A.C.; 
software: A.C.; 
data curation: A.C.;
visualization: A.C.; 
writing---original draft: A.C. and L.B.;
writing---review and editing: A.C. and L.B.
Both authors approved the final version of the manuscript.

\section*{Declaration of generative AI use}

Generative AI tools were used solely for English-language polishing, as neither author is a native English speaker, and for minor cleaning of the \LaTeX{} source code. All scientific content was written, checked, and approved by the authors, who take full responsibility for the manuscript.

\section*{Competing interests}

The authors declare no competing interests.

\section*{Data availability statement}

The data used in the empirical application are publicly available
through the \texttt{CASdatasets} R package. The synthetic data are
generated by the replication code. The code required to reproduce the
simulations, tables, and figures will be made publicly available in a
GitHub repository upon publication, \href{https://github.com/freakonometrics/demutualization/}{https://github.com/freakonometrics/demutualization/}.

\clearpage
\appendix

\section{General homogeneous discrepancies}
\label{sup:general-discrepancies}

The main text defines directly the two working coordinates used in the applications. This section records the more general normalization that motivated them and clarifies which properties require additional structure.

\subsection{Coupled discrepancies}

Let $g_{\bw}:\mathbb R^n\to[0,\infty)$ be continuous, positive definite, invariant under joint relabelling, and positively homogeneous of order one:
\[
g_{\bw}(c\bz)=|c|g_{\bw}(\bz).
\]
The discrepancy
\[
d_{\bw}(\bx,\by)=g_{\bw}(\bx-\by)
\]
induces
\[
D_d^{\rm cpl}(\bpi;\btheta,\bw)
=
1-\frac{d_{\bw}(\bpi,\btheta)}{d_{\bw}(\bar\theta\boldsymbol 1,\btheta)}.
\]
Positive homogeneity is sufficient for exact calibration on the canonical pooling--profiling path. Subadditivity is not needed for that normalization; it is needed only if one wants $d_{\bw}$ to be a metric. The weighted $L^p$ norm used in the main text satisfies all of these properties.

\subsection{Marginal discrepancies}

For a discrepancy $\rho$ between probability distributions on $\mathbb R$, the analogous construction is
\[
D_\rho^{\rm marg}(\bpi;\btheta,\bw)
=
1-\frac{\rho(\mu_{\bpi},\mu_{\btheta})}{\rho(\delta_{\bar\theta},\mu_{\btheta})}.
\]
Endpoint normalization requires that $\rho(\mu,\nu)=0$ if and only if $\mu=\nu$. Monetary scale invariance follows if, for $c>0$,
\[
\rho\{(x\mapsto cx)_\#\mu,(x\mapsto cx)_\#\nu\}=c\rho(\mu,\nu).
\]
Exact canonical-path calibration additionally requires, for the affine contraction $T_a(x)=(1-a)m+ax$ around the benchmark mean $m$, that
\[
\rho\{(T_a)_\#\mu,\mu\}=(1-a)\rho(\delta_m,\mu),
\qquad 0\leq a\leq1.
\]
Wasserstein distances satisfy these properties and are therefore used throughout the main text. An arbitrary distributional discrepancy need not. For example, total variation is invariant to many changes in support location and generally does not decrease linearly along the canonical path; its normalized version therefore fails the desired calibration. This is why the revised main text no longer treats every distributional discrepancy as automatically defining a pooling--profiling coordinate.

\section{Population limits of the empirical coordinates}
\label{app:population-limits}

The preceding indices are finite-portfolio quantities. They nevertheless have a
natural population interpretation when the portfolio is viewed as an empirical
sample from an underlying distribution. The following result makes this point in
the simplest i.i.d.\ setting. It also clarifies why the coupled and distributional
indices should remain distinct: their empirical limits depend on different
population objects.

\begin{proposition}[Almost sure population limits]
\label{prop:as-population-limits}
Let $(\Pi_i,\Theta_i)_{i\geq1}$ be i.i.d.\ copies of a pair
$(\Pi,\Theta)$ on $\mathbb R^2$. Assume that, for some $p\geq1$,
\[
\mathbb E\{|\Pi|^p+|\Theta|^p\}<\infty,
\qquad
\mathbb E|\Theta-\mathbb E[\Theta]|^p>0.
\]
For the portfolio of size $n$, take equal weights $w_{i,n}=1/n$ and set
\[
\mu_{\bpi,n}=\frac1n\sum_{i=1}^n\delta_{\Pi_i},
\qquad
\mu_{\btheta,n}=\frac1n\sum_{i=1}^n\delta_{\Theta_i},
\qquad
\bar\Theta_n=\frac1n\sum_{i=1}^n\Theta_i.
\]
Then the marginal Wasserstein coordinate
\[
D_{p,n}^{W}
=
1-
\frac{W_p(\mu_{\bpi,n},\mu_{\btheta,n})}
{W_p(\delta_{\bar\Theta_n},\mu_{\btheta,n})}
\]
converges almost surely to
\[
D_p^{W,\infty}
=
1-
\frac{W_p(\mathcal L(\Pi),\mathcal L(\Theta))}
{\{\mathbb E|\Theta-\mathbb E[\Theta]|^p\}^{1/p}}.
\]
Moreover, the coupled $L^p$ coordinate
\[
D_{p,n}^{\mathrm{cpl}}
=
1-
\frac{\left(n^{-1}\sum_{i=1}^n|\Pi_i-\Theta_i|^p\right)^{1/p}}
{\left(n^{-1}\sum_{i=1}^n|\Theta_i-\bar\Theta_n|^p\right)^{1/p}}
\]
converges almost surely to
\[
D_p^{\mathrm{cpl},\infty}
=
1-
\frac{\{\mathbb E|\Pi-\Theta|^p\}^{1/p}}
{\{\mathbb E|\Theta-\mathbb E[\Theta]|^p\}^{1/p}}.
\]
Thus the marginal Wasserstein coordinate has a population limit determined only by the
marginal laws of $\Pi$ and $\Theta$, whereas the coupled coordinate has a population
limit determined by their joint law.
\end{proposition}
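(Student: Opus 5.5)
The plan is to prove convergence of numerators and denominators separately, almost surely, and then conclude by the continuous mapping $(x,y)\mapsto 1-x/y$ at a point with $y>0$. The positivity of the denominator limit is the non-degeneracy assumption $\mathbb E|\Theta-\mathbb E\Theta|^p>0$.

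For the coupled coordinate, the numerator is handled by the strong law of large numbers applied to the i.i.d.\ integrable variables $|\Pi_i-\Theta_i|^p$. Integrability follows from $|x-y|^p\leq 2^{p-1}(|x|^p+|y|^p)$, so $n^{-1}\sum_i|\Pi_i-\Theta_i|^p\to\mathbb E|\Pi-\Theta|^p$ a.s.

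The denominator $(n^{-1}\sum_i|\Theta_i-\bar\Theta_n|^p)^{1/p}$ is the weighted norm $\|\boldsymbol\Theta-\bar\Theta_n\boldsymbol 1\|_{p,\bw}$. By Minkowski's inequality it differs from $\|\boldsymbol\Theta-\mathbb E[\Theta]\boldsymbol 1\|_{p,\bw}$ by at most $|\bar\Theta_n-\mathbb E\Theta|$, which tends to $0$ a.s. by the SLLN, since $p\ge1$ gives a finite first moment. The comparison term converges a.s. to $\{\mathbb E|\Theta-\mathbb E\Theta|^p\}^{1/p}$, again by the SLLN. This gives the coupled limit.

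For the Wasserstein coordinate, the denominator is $W_p(\delta_{\bar\Theta_n},\mu_{\btheta,n})$. The only coupling with a Dirac mass is the product one, so this equals exactly the coupled denominator above and is already handled.

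The numerator is the main obstacle. I would use the standard fact that $W_p(\mu_n,\mu)\to0$ whenever $\mu_n\to\mu$ weakly and $\int|x|^pd\mu_n\to\int|x|^pd\mu$ (Villani 2009, Thm.~6.9). By Varadarajan's theorem, the empirical measures $\mu_{\bpi,n}$ converge weakly a.s. to $\mathcal L(\Pi)$. Their $p$th moments converge a.s. by the SLLN. Intersecting countably many probability-one events, and doing the same for $\Theta$, gives $W_p(\mu_{\bpi,n},\mathcal L(\Pi))\to0$ and $W_p(\mu_{\btheta,n},\mathcal L(\Theta))\to0$ a.s.

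The triangle inequality for $W_p$ then yields
\[
|W_p(\mu_{\bpi,n},\mu_{\btheta,n})-W_p(\mathcal L(\Pi),\mathcal L(\Theta))|\le W_p(\mu_{\bpi,n},\mathcal L(\Pi))+W_p(\mu_{\btheta,n},\mathcal L(\Theta))\to0.
\]

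There is also a one-dimensional alternative for this step. Using \eqref{eq:wasserstein-quantile}, one can prove $L^p$ convergence of empirical quantile functions on $(0,1)$ via Glivenko--Cantelli plus uniform integrability from moment convergence. I would fall back on this route only if citing the general theorem is deemed insufficient.

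Finally, since the denominator limit is strictly positive, the ratio converges a.s. for both coordinates. The limits are then read off. The marginal limit depends only on $\mathcal L(\Pi)$ and $\mathcal L(\Theta)$, whereas $\mathbb E|\Pi-\Theta|^p$ depends on the joint law.
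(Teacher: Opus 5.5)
Your proposal is correct and follows essentially the same route as the paper: the strong law for the coupled numerator, the identity $W_p^p(\delta_{\bar\Theta_n},\mu_{\btheta,n})=n^{-1}\sum_i|\Theta_i-\bar\Theta_n|^p$ with a reverse-triangle (Minkowski) bound by $|\bar\Theta_n-\mathbb E\Theta|$ for the shared denominator, and almost sure $W_p$ convergence of the empirical marginals combined with the triangle inequality for the Wasserstein numerator. The only difference is that you justify that last convergence via Varadarajan's theorem and Villani's Theorem~6.9 (weak convergence plus convergence of $p$th moments), which the paper simply asserts as a consequence of the moment assumption.
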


\begin{proof}
The moment assumption implies, by the strong law of large numbers, that
$\bar\Theta_n\to\mathbb E[\Theta]$ almost surely and
\[
\frac1n\sum_{i=1}^n|\Pi_i-\Theta_i|^p
\longrightarrow
\mathbb E|\Pi-\Theta|^p
\qquad\text{a.s.}
\]
It also implies convergence of the empirical marginal measures in Wasserstein
distance of order $p$,
\[
W_p(\mu_{\bpi,n},\mathcal L(\Pi))\to0,
\qquad
W_p(\mu_{\btheta,n},\mathcal L(\Theta))\to0,
\qquad\text{a.s.}
\]
Hence, by the triangle inequality,
\[
W_p(\mu_{\bpi,n},\mu_{\btheta,n})
\longrightarrow
W_p(\mathcal L(\Pi),\mathcal L(\Theta))
\qquad\text{a.s.}
\]
Furthermore,
\[
W_p^p(\delta_{\bar\Theta_n},\mu_{\btheta,n})
=
\frac1n\sum_{i=1}^n|\Theta_i-\bar\Theta_n|^p.
\]
Since $\bar\Theta_n\to\mathbb E[\Theta]$ almost surely and
$n^{-1}\sum_i|\Theta_i-\mathbb E[\Theta]|^p\to
\mathbb E|\Theta-\mathbb E[\Theta]|^p$ almost surely, the reverse triangle
inequality for empirical $L^p$ norms gives
\[
\left|
\left(\frac1n\sum_{i=1}^n|\Theta_i-\bar\Theta_n|^p\right)^{1/p}
-
\left(\frac1n\sum_{i=1}^n|\Theta_i-\mathbb E[\Theta]|^p\right)^{1/p}
\right|
\leq
|\bar\Theta_n-\mathbb E[\Theta]|,
\]
and therefore
\[
W_p(\delta_{\bar\Theta_n},\mu_{\btheta,n})
\longrightarrow
\{\mathbb E|\Theta-\mathbb E[\Theta]|^p\}^{1/p}
\qquad\text{a.s.}
\]
The assumed non-degeneracy of $\Theta$ in $L^p$ keeps the limiting denominator
strictly positive. Taking ratios yields the convergence of $D_{p,n}^{W}$ and
$D_{p,n}^{\mathrm{cpl}}$.
\end{proof}

\begin{remark}[Unequal weights]
\label{rem:unequal-weights}
The equal-weight assumption is used only to keep the statement transparent. The
same proof applies to deterministic triangular arrays of portfolio weights
$w_{i,n}$ whenever the corresponding weighted empirical marginals converge
almost surely in $W_p$ and the relevant weighted $p$th-moment averages satisfy
a strong law. We do not pursue primitive sufficient conditions here.
\end{remark}

Proposition~\ref{prop:as-population-limits} provides a population reading of the finite-portfolio coordinates without making the coupled and marginal notions collapse into one another (see also the plug-in result below for a convergence in probability). In large portfolios, the marginal Wasserstein coordinate measures a marginal discrepancy between premium and benchmark distributions, while the coupled coordinate measures a joint discrepancy between assigned premiums and
benchmark risks.

\section{Finite-portfolio benchmark stability}
\label{sup:benchmark-stability}

\begin{proposition}[Plug-in stability]
\label{sup:prop:benchmark-stability}
Let $\btheta_0=(\theta_{01},\ldots,\theta_{0n})$ be a benchmark vector, and let $\widehat{\btheta}=(\widehat\theta_1,\ldots,\widehat\theta_n)$ be an estimated benchmark. Suppose that
\[
\|\widehat{\btheta}-\btheta_0\|_{p,\bw}\leq\varepsilon
\]
and that the benchmark dispersion is bounded away from zero:
\[
\|\btheta_0-\bar\theta_0\boldsymbol 1\|_{p,\bw}\geq b_p>0,
\qquad
\|\widehat{\btheta}-\bar{\widehat\theta}\boldsymbol 1\|_{p,\bw}\geq b_p/2.
\]
For a fixed evaluated premium vector $\bpi$, $D_p^{\mathrm{cpl}}(\bpi;\widehat{\btheta})$ differs from $D_p^{\mathrm{cpl}}(\bpi;\btheta_0)$ by $O(\varepsilon)$. Moreover,
\[
W_p(\mu_{\widehat{\btheta}},\mu_{\btheta_0})\leq \varepsilon,
\]
so $D_p^W(\bpi;\widehat{\btheta})$ also differs from $D_p^W(\bpi;\btheta_0)$ by $O(\varepsilon)$, provided the denominator is bounded away from zero.
\end{proposition}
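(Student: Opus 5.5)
The plan is to write each coordinate as $1-N/R$, with $N$ a numerator distance and $R$ the pooling-endpoint denominator. I then bound the perturbation of $N$ and of $R$ separately by the weighted Minkowski (triangle) inequality. The ratio step is handled by the elementary identity
\[
\frac{N_1}{R_1}-\frac{N_0}{R_0}=\frac{N_1-N_0}{R_1}+N_0\,\frac{R_0-R_1}{R_0R_1},
\]
which turns additive bounds on $N$ and $R$ into an $O(\varepsilon)$ bound on $D$, with constants depending on $b_p$ and $\|\bpi-\btheta_0\|_{p,\bw}$.

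\textbf{Coupled coordinate.} Set $N_0=\|\bpi-\btheta_0\|_{p,\bw}$ and $N_1=\|\bpi-\widehat\btheta\|_{p,\bw}$. The reverse triangle inequality for $\|\cdot\|_{p,\bw}$ gives $|N_1-N_0|\le\|\widehat\btheta-\btheta_0\|_{p,\bw}\le\varepsilon$.

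For the denominators, write $R=\|\btheta-\bar\theta\boldsymbol 1\|_{p,\bw}$. By the reverse triangle inequality,
\[
|R_1-R_0|\le\|\widehat\btheta-\btheta_0\|_{p,\bw}+|\bar{\widehat\theta}-\bar\theta_0|.
\]
The mean difference is controlled by Jensen's (or Hölder's) inequality with the probability weights $\bw$: $|\sum_iw_i(\widehat\theta_i-\theta_{0i})|\le\|\widehat\btheta-\btheta_0\|_{1,\bw}\le\|\widehat\btheta-\btheta_0\|_{p,\bw}\le\varepsilon$. Hence $|R_1-R_0|\le2\varepsilon$.

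The hypotheses give $R_0\ge b_p$ and $R_1\ge b_p/2$, so the identity above yields
\[
|D_p^{\rm cpl}(\bpi;\widehat\btheta)-D_p^{\rm cpl}(\bpi;\btheta_0)|\le\frac{2\varepsilon}{b_p}+\frac{4N_0\varepsilon}{b_p^2},
\]
which is $O(\varepsilon)$ for fixed $\bpi$.

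\textbf{Wasserstein coordinate.} The key step is $W_p(\mu_{\widehat\btheta},\mu_{\btheta_0})\le\varepsilon$. This is the coupling argument already used in Proposition~\ref{prop:mismatch}: the policy-wise pairing $\sum_iw_i\delta_{(\widehat\theta_i,\theta_{0i})}$ is an admissible plan in $\Pi(\mu_{\widehat\btheta},\mu_{\btheta_0})$. Its $p$-cost is $\|\widehat\btheta-\btheta_0\|_{p,\bw}^p$, so the bound follows.

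The numerators then differ by at most $\varepsilon$, by the triangle inequality for the metric $W_p$. For the denominators, $W_p(\delta_{\bar\theta},\mu_{\btheta})=\|\btheta-\bar\theta\boldsymbol 1\|_{p,\bw}$, since the only coupling with a Dirac mass is the product plan. These are therefore the same quantities $R_0,R_1$ as in the coupled case, so $|R_1-R_0|\le2\varepsilon$ and both are bounded below as before. The same ratio identity gives the $O(\varepsilon)$ bound, with $N_0=W_p(\mu_{\bpi},\mu_{\btheta_0})$.

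\textbf{Main obstacle.} The only genuinely delicate point is controlling the shift in the pooled mean inside the denominator. Once Jensen's inequality bounds $|\bar{\widehat\theta}-\bar\theta_0|$ by $\varepsilon$, everything else is Minkowski together with the lower bound $b_p/2$. The explicit constants show the bound degrades as $b_p\to0$, matching the non-degeneracy proviso in the statement.
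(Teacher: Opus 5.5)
Your proof is correct and follows the paper's argument essentially step for step. You use the reverse triangle inequality for the numerator and Jensen's inequality to control the mean shift, so that the denominators differ by at most $2\varepsilon$; this gives the same bound $2\varepsilon/b_p+4\|\bpi-\btheta_0\|_{p,\bw}\varepsilon/b_p^2$, and the policy-wise coupling then gives $W_p(\mu_{\widehat\btheta},\mu_{\btheta_0})\le\varepsilon$. Your remark that $W_p(\delta_{\bar\theta},\mu_{\btheta})=\|\btheta-\bar\theta\boldsymbol 1\|_{p,\bw}$ usefully makes explicit that the Wasserstein denominator proviso is already implied by the stated hypotheses (the paper additionally extends the bound to premiums rebalanced separately to each benchmark, which the statement itself does not require).
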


\subsection{Proof}

We first prove the coupled part. The numerator is Lipschitz:
\[
\left|
\|\bpi-\widehat{\btheta}\|_{p,\bw}
-
\|\bpi-\btheta_0\|_{p,\bw}
\right|
\leq
\|\widehat{\btheta}-\btheta_0\|_{p,\bw}
\leq
\varepsilon.
\]
For the denominator,
\[
\left|
\|\widehat{\btheta}-\bar{\widehat\theta}\boldsymbol 1\|_{p,\bw}
-
\|\btheta_0-\bar\theta_0\boldsymbol 1\|_{p,\bw}
\right|
\leq
\|(\widehat{\btheta}-\btheta_0)-(\bar{\widehat\theta}-\bar\theta_0)\boldsymbol 1\|_{p,\bw}.
\]
Since
\[
|\bar{\widehat\theta}-\bar\theta_0|
=
\left|\sum_iw_i(\widehat\theta_i-\theta_{0i})\right|
\leq
\|\widehat{\btheta}-\btheta_0\|_{p,\bw}
\]
by Jensen's inequality, the right-hand side is bounded by $2\varepsilon$. Combining these numerator and denominator bounds with
\[
\|\btheta_0-\bar\theta_0\boldsymbol 1\|_{p,\bw}\geq b_p,
\qquad
\|\widehat{\btheta}-\bar{\widehat\theta}\boldsymbol 1\|_{p,\bw}\geq b_p/2,
\]
gives
\[
\left|
D_p^{\mathrm{cpl}}(\bpi;\widehat{\btheta})
-
D_p^{\mathrm{cpl}}(\bpi;\btheta_0)
\right|
\leq
\frac{2\varepsilon}{b_p}
+
\frac{4\|\bpi-\btheta_0\|_{p,\bw}}{b_p^2}\varepsilon,
\]
up to the same first-order constants as in the main text. In particular, the difference is $O(\varepsilon)$.

For the Wasserstein part, consider the empirical coupling that pairs $\widehat\theta_i$ with $\theta_{0i}$ and assigns mass $w_i$. This coupling is admissible between $\mu_{\widehat{\btheta}}$ and $\mu_{\btheta_0}$. Hence
\[
W_p^p(\mu_{\widehat{\btheta}},\mu_{\btheta_0})
\leq
\sum_iw_i|\widehat\theta_i-\theta_{0i}|^p
=
\|\widehat{\btheta}-\btheta_0\|_{p,\bw}^p,
\]
which gives
\[
W_p(\mu_{\widehat{\btheta}},\mu_{\btheta_0})
\leq
\varepsilon.
\]
The stability of $D_p^W$ then follows from the triangle inequality and the same denominator argument.

Suppose now that a fixed raw premium vector $\bpi^{\rm raw}$, with $\bar\pi^{\rm raw}>0$, is rebalanced separately to the two benchmarks:
\[
\bpi_0
=
\bpi^{\rm raw}\frac{\bar\theta_0}{\bar\pi^{\rm raw}},
\qquad
\widehat\bpi
=
\bpi^{\rm raw}\frac{\bar{\widehat\theta}}{\bar\pi^{\rm raw}}.
\]
Since $|\bar{\widehat\theta}-\bar\theta_0|\leq\varepsilon$,
\[
\|\widehat\bpi-\bpi_0\|_{p,\bw}
\leq
\frac{\|\bpi^{\rm raw}\|_{p,\bw}}{\bar\pi^{\rm raw}}\,\varepsilon.
\]
Adding this term to the coupled numerator perturbation above proves the same $O(\varepsilon)$ stability for the empirically rebalanced premiums. For the marginal coordinate, the empirical coupling of $\widehat\pi_i$ and $\pi_{0i}$ gives the same bound in $W_p$. Thus both conclusions remain valid, provided the relevant means and benchmark dispersions stay bounded away from zero.

\section{Plug-in benchmark estimates}
\label{app:estimated-benchmark}

The main text treats the benchmark vector $\btheta$ as given. In empirical
applications, however, $\btheta$ is often obtained from a fitted model, an
oracle approximation, or a regulatory reference model. This appendix records a
simple plug-in consistency result. It shows that replacing the benchmark by a
weakly consistent estimate changes the convergence mode from almost sure
convergence to convergence in probability, unless the benchmark estimator is
itself strongly consistent.

\begin{proposition}[Plug-in consistency for an estimated benchmark]
\label{prop:plugin-benchmark}
Let $(\Pi_i,\Theta_i)_{i\geq1}$ be i.i.d.\ copies of $(\Pi,\Theta)$ satisfying
\[
\mathbb E\{|\Pi|^p+|\Theta|^p\}<\infty,
\qquad
\mathbb E|\Theta-\mathbb E[\Theta]|^p>0,
\]
for some $p\geq1$. Let $\widehat\Theta_{i,n}$ be benchmark estimates and define
\[
\widehat\mu_{\btheta,n}
=
\frac1n\sum_{i=1}^n\delta_{\widehat\Theta_{i,n}},
\qquad
\widehat{\bar\Theta}_n
=
\frac1n\sum_{i=1}^n\widehat\Theta_{i,n}.
\]
Assume the empirical $L^p$ benchmark error satisfies
\[
\left(\frac1n\sum_{i=1}^n
|\widehat\Theta_{i,n}-\Theta_i|^p\right)^{1/p}
\xrightarrow{\mathbb P}0.
\]
Then the plug-in marginal Wasserstein coordinate
\[
\widehat D_{p,n}^{W}
=
1-
\frac{W_p(\mu_{\bpi,n},\widehat\mu_{\btheta,n})}
{W_p(\delta_{\widehat{\bar\Theta}_n},\widehat\mu_{\btheta,n})}
\]
converges in probability to
\[
D_p^{W,\infty}
=
1-
\frac{W_p(\mathcal L(\Pi),\mathcal L(\Theta))}
{\{\mathbb E|\Theta-\mathbb E[\Theta]|^p\}^{1/p}}.
\]
Similarly, the plug-in coupled coordinate
\[
\widehat D_{p,n}^{\mathrm{cpl}}
=
1-
\frac{\left(n^{-1}\sum_{i=1}^n|\Pi_i-\widehat\Theta_{i,n}|^p\right)^{1/p}}
{\left(n^{-1}\sum_{i=1}^n|\widehat\Theta_{i,n}-\widehat{\bar\Theta}_n|^p\right)^{1/p}}
\]
converges in probability to
\[
D_p^{\mathrm{cpl},\infty}
=
1-
\frac{\{\mathbb E|\Pi-\Theta|^p\}^{1/p}}
{\{\mathbb E|\Theta-\mathbb E[\Theta]|^p\}^{1/p}}.
\]
If the empirical benchmark error converges almost surely instead of in
probability, then both plug-in convergences above also hold almost surely.
\end{proposition}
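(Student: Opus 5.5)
The plan is to reduce everything to Proposition~\ref{prop:as-population-limits}, which already gives the almost sure limits of the oracle quantities built from $(\Pi_i,\Theta_i)$. It then suffices to show that each plug-in numerator and denominator differs from its oracle counterpart by at most a constant multiple of
\[
\varepsilon_n=\Bigl(n^{-1}\sum_{i=1}^n|\widehat\Theta_{i,n}-\Theta_i|^p\Bigr)^{1/p}.
\]
This is exactly the finite-portfolio stability argument of Proposition~\ref{sup:prop:benchmark-stability}, applied with equal weights $w_i=1/n$, $\btheta_0=(\Theta_1,\ldots,\Theta_n)$ and $\widehat\btheta=(\widehat\Theta_{1,n},\ldots,\widehat\Theta_{n,n})$. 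Because that bound is deterministic, convergence in probability of $\varepsilon_n$ transfers to the plug-in coordinates, and almost sure convergence of $\varepsilon_n$ transfers almost surely.

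The key steps, in order, are as follows.

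First, for the coupled numerator, the reverse triangle inequality for the empirical $L^p$ norm gives
\[
\Bigl|\Bigl(n^{-1}\sum_i|\Pi_i-\widehat\Theta_{i,n}|^p\Bigr)^{1/p}-\Bigl(n^{-1}\sum_i|\Pi_i-\Theta_i|^p\Bigr)^{1/p}\Bigr|\leq\varepsilon_n .
\]

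Second, for the common denominator, note that
\[
W_p^p(\delta_{\widehat{\bar\Theta}_n},\widehat\mu_{\btheta,n})=n^{-1}\sum_i|\widehat\Theta_{i,n}-\widehat{\bar\Theta}_n|^p .
\]
The reverse triangle inequality combined with Jensen's bound $|\widehat{\bar\Theta}_n-\bar\Theta_n|\leq\varepsilon_n$ shows that this root differs from $W_p(\delta_{\bar\Theta_n},\mu_{\btheta,n})$ by at most $2\varepsilon_n$.

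Third, for the Wasserstein numerator, the identity coupling gives $W_p(\widehat\mu_{\btheta,n},\mu_{\btheta,n})\leq\varepsilon_n$. The triangle inequality for $W_p$ then yields
\[
|W_p(\mu_{\bpi,n},\widehat\mu_{\btheta,n})-W_p(\mu_{\bpi,n},\mu_{\btheta,n})|\leq\varepsilon_n .
\]

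Fourth, combine these bounds with the almost sure limits from Proposition~\ref{prop:as-population-limits}. The oracle numerators converge to $W_p(\mathcal L(\Pi),\mathcal L(\Theta))$ and $\{\mathbb E|\Pi-\Theta|^p\}^{1/p}$, and the oracle denominator converges to $\sigma_p:=\{\mathbb E|\Theta-\mathbb E\Theta|^p\}^{1/p}$. Adding the perturbations of size $O(\varepsilon_n)$, each plug-in numerator and the plug-in denominator therefore converge in probability to the same limits. When $\varepsilon_n\to0$ almost surely, they converge almost surely.

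Fifth, conclude by the continuous mapping theorem applied to $(x,y)\mapsto1-x/y$, which is continuous on $\{y>0\}$. Both $D_p^{W,\infty}$ and $D_p^{\mathrm{cpl},\infty}$ then follow in probability, and almost surely in the strong case.

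The only delicate point is the ratio, since the plug-in denominator could in principle be near zero for finite $n$. I would handle this explicitly: by the non-degeneracy assumption $\sigma_p>0$, and the plug-in denominator converges in probability to $\sigma_p$. Hence the event that it exceeds $\sigma_p/2$ has probability tending to one, and on that event the ratio is controlled uniformly by the bounds above. In the almost sure case the denominator eventually exceeds $\sigma_p/2$ pathwise, so the coordinates are well defined for all large $n$.

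The rest is routine bookkeeping. No rate is required, only that $\varepsilon_n\to0$ in the relevant mode.
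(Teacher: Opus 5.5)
Your proposal is correct and follows essentially the same route as the paper. Both arguments use the same deterministic perturbation bounds: at most $\varepsilon_n$ for the Wasserstein and coupled numerators (via the identity coupling and the reverse triangle inequality) and at most $2\varepsilon_n$ for the common denominator (via Jensen's inequality), followed by reduction to the almost sure oracle limits of Proposition~\ref{prop:as-population-limits}. Your explicit treatment of the ratio step, showing that the plug-in denominator exceeds $\sigma_p/2$ with probability tending to one, adds a bit of care that the paper leaves implicit.
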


\begin{proof}
Let
\[
\varepsilon_n
=
\left(\frac1n\sum_{i=1}^n
|\widehat\Theta_{i,n}-\Theta_i|^p\right)^{1/p}.
\]
By assumption, $\varepsilon_n\xrightarrow{\mathbb P}0$. The empirical coupling
that pairs each $\widehat\Theta_{i,n}$ with $\Theta_i$ gives
\[
W_p(\widehat\mu_{\btheta,n},\mu_{\btheta,n})
\leq
\varepsilon_n,
\]
so $\widehat\mu_{\btheta,n}$ and $\mu_{\btheta,n}$ are asymptotically equivalent
in $W_p$ in probability. Also,
\[
|\widehat{\bar\Theta}_n-\bar\Theta_n|
\leq
\frac1n\sum_{i=1}^n |\widehat\Theta_{i,n}-\Theta_i|
\leq
\varepsilon_n,
\]
where the last inequality follows from Jensen's inequality. Hence
$\widehat{\bar\Theta}_n-\bar\Theta_n\xrightarrow{\mathbb P}0$.

For the numerator of the marginal Wasserstein coordinate, the triangle inequality yields
\[
\left|
W_p(\mu_{\bpi,n},\widehat\mu_{\btheta,n})
-
W_p(\mu_{\bpi,n},\mu_{\btheta,n})
\right|
\leq
W_p(\widehat\mu_{\btheta,n},\mu_{\btheta,n})
\leq
\varepsilon_n,
\]
which is $o_{\mathbb P}(1)$. For the denominator, using
$W_p^p(\delta_a,\nu)=\int |x-a|^p\,d\nu(x)$ and the reverse triangle inequality
for empirical $L^p$ norms gives
\[
\left|
W_p(\delta_{\widehat{\bar\Theta}_n},\widehat\mu_{\btheta,n})
-
W_p(\delta_{\bar\Theta_n},\mu_{\btheta,n})
\right|
\leq
2\varepsilon_n.
\]
Thus the plug-in marginal Wasserstein coordinate is asymptotically equivalent in
probability to the oracle empirical coordinate in the preceding population-limit proposition.

For the coupled numerator,
\[
\left|
\left(\frac1n\sum_{i=1}^n|\Pi_i-\widehat\Theta_{i,n}|^p\right)^{1/p}
-
\left(\frac1n\sum_{i=1}^n|\Pi_i-\Theta_i|^p\right)^{1/p}
\right|
\leq
\varepsilon_n,
\]
again by the reverse triangle inequality. The same argument applied to the
centered benchmark vectors gives
\[
\left|
\left(\frac1n\sum_{i=1}^n
|\widehat\Theta_{i,n}-\widehat{\bar\Theta}_n|^p\right)^{1/p}
-
\left(\frac1n\sum_{i=1}^n
|\Theta_i-\bar\Theta_n|^p\right)^{1/p}
\right|
\leq
2\varepsilon_n.
\]
Therefore the plug-in coupled coordinate is asymptotically equivalent in probability
to the oracle empirical coupled coordinate. Since the oracle empirical indices
converge almost surely by the preceding population-limit proposition, the
plug-in indices converge in probability to the same limits. If
$\varepsilon_n\to0$ almost surely, the same proof gives almost sure convergence.
\end{proof}

\begin{remark}[Estimated premiums]
\label{rem:estimated-premiums}
The same argument applies if the premium vector is itself estimated or
post-processed. If
\[
\left(\frac1n\sum_{i=1}^n
|\widehat\Pi_{i,n}-\Pi_i|^p\right)^{1/p}
\xrightarrow{\mathbb P}0,
\]
then replacing $\Pi_i$ by $\widehat\Pi_{i,n}$ in the plug-in indices leaves the
same population limits unchanged.
\end{remark}

\section{Technical and commercial premiums}
\label{app:commercial-premiums}

The indices are designed primarily for technical pure premiums. A charged premium may additionally include expenses, commissions, capital and profit loadings, taxes, payment-plan effects and market adjustments. These components can create premium heterogeneity without corresponding to expected-loss heterogeneity \citep{WernerModlin2016,OhlssonJohansson2010,WuthrichMerz2023}.

Write, schematically,
\[
\pi_i^{\mathrm{comm}}
=
\pi_i^{\mathrm{risk}}+\ell_i+k_i+o_i,
\]
where \(\ell_i\) contains expenses and commissions, \(k_i\) contains capital or profit loadings, and \(o_i\) contains market adjustments. For either working coordinate $D_p^\star$, with $\star\in\{\mathrm{cpl},W\}$, computing $D_p^\star(\bpi^{\mathrm{risk}};\btheta,\bw)$ measures technical profiling. Computing the same coordinate on a balanced commercial premium measures the combined effect of risk classification and the commercial layer. Their difference,
\[
\Delta_{D_p^\star}^{\mathrm{comm}}
=
D_p^\star(\bpi^{\mathrm{comm}};\btheta,\bw)
-
D_p^\star(\bpi^{\mathrm{risk}};\btheta,\bw),
\]
is a diagnostic of commercial re-profiling whose interpretation depends on the chosen coupled or distributional coordinate.

Common proportional or additive loadings disappear under the corresponding balancing convention. Heterogeneous loadings do not: a flat expense charge may compress relative premium differences, whereas risk-proportional expenses or demand-based adjustments may amplify them. Accordingly, empirical audits should use technical pure premiums whenever possible and analyze observed commercial prices as a separate layer. Price optimization raises a distinct governance issue because it may personalize final prices through willingness-to-pay or retention information rather than expected loss \citep{NAIC2015PriceOptimization,WernerModlin2016}.

\section{Canonical-path dispersion and Gini diagnostics}
\label{sup:canonical-diagnostics}

This section develops the Gini results used in the main text. The classical
Gini coefficient is a marginal dispersion functional: it summarizes the
spread of the premium distribution but does not record which policy receives
which premium. This places it naturally between elementary dispersion
summaries and the full marginal Wasserstein coordinate.

For a non-negative portfolio vector
$\bx=(x_1,\ldots,x_n)$ with positive exposure-weighted mean
\[
\bar x=\sum_i w_i x_i>0,
\]
define the exposure-weighted Gini coefficient by
\begin{equation}
\label{eq:sup-weighted-gini}
\mathcal G(\bx)
=
\frac{
\sum_i\sum_j w_iw_j|x_i-x_j|
}{
2\bar x
}.
\end{equation}
Equivalently, if $Q_x$ denotes the quantile function of the weighted
empirical distribution $\mu_{\bx}=\sum_iw_i\delta_{x_i}$, then
\begin{equation}
\label{eq:sup-gini-quantile}
\mathcal G(\bx)
=
\frac{1}{\bar x}
\int_0^1(2u-1)Q_x(u)\,du.
\end{equation}

For a balanced tariff $\bpi$, so that
$\sum_iw_i\pi_i=\bar\theta$, define the benchmark-relative Gini ratio
\begin{equation}
\label{eq:sup-gini-ratio}
D^{\rm G}(\bpi;\btheta,\bw)
=
\frac{\mathcal G(\bpi)}{\mathcal G(\btheta)},
\end{equation}
whenever the benchmark is non-degenerate. This quantity is not introduced
as a third pooling--profiling coordinate: unlike $D_1^W$, it compresses the
entire marginal premium distribution into a single scalar. It is nevertheless
useful because it admits the same exact calibration on the canonical path.

\subsection{Canonical-path identities}

Recall the canonical pooling--profiling path
\[
\bpi^{(a)}
=
(1-a)\bar\theta\boldsymbol 1+a\btheta,
\qquad 0\leq a\leq1.
\]

\begin{proposition}[Canonical-path Gini calibration]
\label{prop:sup-gini-canonical}
For every $p\geq1$,
\[
D_p^{\rm cpl}(\bpi^{(a)})
=
D_p^W(\bpi^{(a)})
=
a,
\qquad
M_p(\bpi^{(a)})=0.
\]
Moreover,
\begin{equation}
\label{eq:sup-sd-canonical}
\operatorname{sd}_{\bw}(\bpi^{(a)})
=
a\,\operatorname{sd}_{\bw}(\btheta),
\end{equation}
and
\begin{equation}
\label{eq:sup-gini-canonical}
\mathcal G(\bpi^{(a)})
=
a\,\mathcal G(\btheta).
\end{equation}
Consequently,
\begin{equation}
\label{eq:sup-three-calibrations}
D^{\rm G}(\bpi^{(a)})
=
D_1^W(\bpi^{(a)})
=
D_1^{\rm cpl}(\bpi^{(a)})
=
a.
\end{equation}
\end{proposition}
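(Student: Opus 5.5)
The whole proof rests on one affine identity. For every $i$, $\pi_i^{(a)}-\theta_i=(1-a)(\bar\theta-\theta_i)$, and $\pi_i^{(a)}-\bar\theta=a(\theta_i-\bar\theta)$. The plan is to prove each claim in the order stated, by substituting these identities into the relevant functional and using positive homogeneity.

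\emph{Coupled and marginal coordinates.} For the coupled coordinate, the first identity gives $\|\bpi^{(a)}-\btheta\|_{p,\bw}=(1-a)\|\bar\theta\boldsymbol 1-\btheta\|_{p,\bw}$, since $1-a\ge0$. Non-degeneracy keeps the denominator positive, so $D_p^{\rm cpl}(\bpi^{(a)})=a$, exactly as in Proposition~\ref{prop:coupled-properties}.

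For the marginal coordinate, the map $x\mapsto(1-a)\bar\theta+ax$ is nondecreasing when $a\ge0$. Hence $Q_{\mu_{\bpi^{(a)}}}(u)=(1-a)\bar\theta+aQ_{\mu_{\btheta}}(u)$. I will make this explicit at $a=0$, where the map is constant and the quantile is identically $\bar\theta$. Plugging this into the quantile representation \eqref{eq:wasserstein-quantile} gives
\[
W_p(\mu_{\bpi^{(a)}},\mu_{\btheta})=(1-a)\Bigl(\int_0^1|Q_{\mu_{\btheta}}(u)-\bar\theta|^p\,du\Bigr)^{1/p}=(1-a)W_p(\delta_{\bar\theta},\mu_{\btheta}),
\]
so $D_p^W(\bpi^{(a)})=a$. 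This is just Proposition~\ref{prop:wasserstein-properties}.

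\emph{Mismatch.} By \eqref{eq:mismatch-identity}, $M_p=(1-a)^p-(1-a)^p=0$. Equivalently, the policy-level pairing is the comonotone coupling of the two marginals, which is optimal on the real line.

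\emph{Dispersion and Gini.} The weighted mean of $\bpi^{(a)}$ equals $\bar\theta$. The second identity then gives $\operatorname{sd}_{\bw}(\bpi^{(a)})^2=\sum_iw_ia^2(\theta_i-\bar\theta)^2$, hence \eqref{eq:sup-sd-canonical}. For the Gini coefficient, I will use the pairwise form \eqref{eq:sup-weighted-gini}. Since $|\pi^{(a)}_i-\pi^{(a)}_j|=a|\theta_i-\theta_j|$ and the mean is unchanged, $\mathcal G(\bpi^{(a)})=a\,\mathcal G(\btheta)$. Dividing by $\mathcal G(\btheta)$ yields $D^{\rm G}(\bpi^{(a)})=a$, and combining this with the $p=1$ cases above gives \eqref{eq:sup-three-calibrations}.

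\emph{Hypotheses and the main obstacle.} The main care point is checking that the standing hypotheses hold so that all ratios are defined. $\mathcal G(\btheta)>0$ requires $\bar\theta>0$ together with non-degeneracy. Non-negativity of $\bpi^{(a)}$ follows from its being a convex combination of the non-negative $\btheta$ and $\bar\theta$. The quantile step at the endpoint $a=0$ is the only spot where the generalized-inverse definition needs a line of justification.
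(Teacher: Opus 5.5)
Your proposal is correct and follows essentially the same route as the paper. Both rest on the affine identities $\pi_i^{(a)}-\theta_i=(1-a)(\bar\theta-\theta_i)$ and $\pi_i^{(a)}-\bar\theta=a(\theta_i-\bar\theta)$, then use positive homogeneity for the coupled norm, the affine quantile representation for $W_p$, and the pairwise Gini formula with the mean unchanged; your hypothesis checks ($\bar\theta>0$, non-negativity, the $a=0$ quantile case) and the mismatch-identity or comonotonicity argument for $M_p=0$ just spell out what the paper defers to ``the main proofs''.
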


Thus, on the canonical path, the parameter $a$ has three simultaneous
interpretations: it is the fraction of benchmark Gini dispersion retained
in premiums, the degree of marginal Wasserstein profiling, and the degree
of policy-level coupled profiling. The equality is specific to this path.
Away from it, the three summaries need not agree.

\begin{proof}
The identities for $D_p^{\rm cpl}$, $D_p^W$ and $M_p$ follow from the
homogeneity and monotone one-dimensional transport arguments given in the
main proofs. Since
\[
\pi_i^{(a)}-\bar\theta
=
a(\theta_i-\bar\theta),
\]
equation \eqref{eq:sup-sd-canonical} follows immediately.

For the Gini coefficient,
\[
|\pi_i^{(a)}-\pi_j^{(a)}|
=
a|\theta_i-\theta_j|,
\]
while
\[
\sum_iw_i\pi_i^{(a)}=\bar\theta.
\]
Substitution into \eqref{eq:sup-weighted-gini} therefore gives
\[
\mathcal G(\bpi^{(a)})
=
a\mathcal G(\btheta),
\]
and division by $\mathcal G(\btheta)$ yields
\eqref{eq:sup-three-calibrations}.
\end{proof}

\subsection{Gini dispersion and the marginal Wasserstein coordinate}

The preceding identity is exact only on the canonical path. A more general
relation follows from the fact that the Gini coefficient is a linear
functional of the quantile function, whereas $W_1$ measures its full
$L^1$ discrepancy.

\begin{proposition}[Gini--Wasserstein control]
\label{prop:sup-gini-wasserstein}
Let $\bpi$ and $\btheta$ be non-negative, non-degenerate and balanced,
with common mean $\bar\theta>0$. Then
\begin{equation}
\label{eq:sup-gini-w1}
\left|
\mathcal G(\bpi)-\mathcal G(\btheta)
\right|
\leq
\frac{
W_1(\mu_{\bpi},\mu_{\btheta})
}{
\bar\theta
}.
\end{equation}
Equivalently,
\begin{equation}
\label{eq:sup-relative-gini-w1}
\frac{
|D^{\rm G}(\bpi;\btheta,\bw)-1|
}{
\kappa_\theta
}
\leq
1-D_1^W(\bpi;\btheta,\bw),
\end{equation}
where
\begin{equation}
\label{eq:sup-kappa-gini}
\kappa_\theta
=
\frac{
W_1(\delta_{\bar\theta},\mu_{\btheta})
}{
\bar\theta\,\mathcal G(\btheta)
}
\in[1,2].
\end{equation}
Since $D_1^W\geq D_1^{\rm cpl}$,
\begin{equation}
\label{eq:sup-gini-chain}
\frac{
|D^{\rm G}(\bpi;\btheta,\bw)-1|
}{
\kappa_\theta
}
\leq
1-D_1^W(\bpi;\btheta,\bw)
\leq
1-D_1^{\rm cpl}(\bpi;\btheta,\bw).
\end{equation}
\end{proposition}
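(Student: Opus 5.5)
We use the quantile representation \eqref{eq:sup-gini-quantile}. Since $\bpi$ and $\btheta$ are balanced, both have mean $\bar\theta$, so the two Gini coefficients share a denominator and
\[
\mathcal G(\bpi)-\mathcal G(\btheta)
=
\frac1{\bar\theta}\int_0^1(2u-1)\{Q_{\mu_{\bpi}}(u)-Q_{\mu_{\btheta}}(u)\}\,du .
\]
On $(0,1)$ we have $|2u-1|\le1$. Taking absolute values inside the integral and applying the one-dimensional quantile formula \eqref{eq:wasserstein-quantile} with $p=1$ gives
\[
|\mathcal G(\bpi)-\mathcal G(\btheta)|
\le
\frac1{\bar\theta}\int_0^1|Q_{\mu_{\bpi}}-Q_{\mu_{\btheta}}|\,du
=
\frac{W_1(\mu_{\bpi},\mu_{\btheta})}{\bar\theta}.
\]
This is \eqref{eq:sup-gini-w1}. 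Before this step I would verify the quantile form \eqref{eq:sup-gini-quantile} of the weighted Gini. The route is the identity $\sum_{i,j}w_iw_j|x_i-x_j|=\mathbb E|X-X'|=2\int_0^1(2u-1)Q_x(u)\,du$ for i.i.d.\ copies $X,X'$. This holds for discrete weighted distributions through the generalized quantile function.

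For \eqref{eq:sup-relative-gini-w1}, divide \eqref{eq:sup-gini-w1} by $\mathcal G(\btheta)>0$. This quantity is positive because $\btheta$ is non-degenerate. We obtain $|D^{\rm G}-1|\le W_1(\mu_{\bpi},\mu_{\btheta})/\{\bar\theta\,\mathcal G(\btheta)\}$. Now write $W_1(\mu_{\bpi},\mu_{\btheta})=\{1-D_1^W\}\,W_1(\delta_{\bar\theta},\mu_{\btheta})$, which is just the definition \eqref{eq:wasserstein-coordinate}. The factor $\kappa_\theta$ then appears exactly. The last inequality in the chain \eqref{eq:sup-gini-chain} is Proposition~\ref{prop:mismatch}.

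The main obstacle is showing $\kappa_\theta\in[1,2]$, equivalently $\bar\theta\,\mathcal G(\btheta)\le \mathbb E|\Theta-\bar\theta|\le 2\bar\theta\,\mathcal G(\btheta)$. Here $\Theta\sim\mu_{\btheta}$ and $W_1(\delta_{\bar\theta},\mu_{\btheta})=\mathbb E|\Theta-\bar\theta|$. We also have $2\bar\theta\,\mathcal G(\btheta)=\mathbb E|\Theta-\Theta'|$ for an independent copy $\Theta'$.

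The upper bound $\kappa_\theta\le2$ comes from $\mathbb E|\Theta-\bar\theta|\le \mathbb E|\Theta-\Theta'|$. This holds because $a\mapsto\mathbb E|\Theta-a|$ is convex. Conditioning on $\Theta$ and applying Jensen's inequality in $\Theta'$ gives $\mathbb E|\Theta-\Theta'|\ge\mathbb E|\Theta-\mathbb E\Theta'|$.

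For the lower bound $\kappa_\theta\ge1$ we need $\mathbb E|\Theta-\Theta'|\le 2\mathbb E|\Theta-\bar\theta|$. The triangle inequality $|\Theta-\Theta'|\le|\Theta-\bar\theta|+|\Theta'-\bar\theta|$ followed by taking expectations gives it directly.

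I would double-check the direction of these two bounds against the definition of $\kappa_\theta$, since the roles of the factor $2$ are easy to swap. Finally, as a sanity check, the canonical path should give equality in the ratio form: from Proposition~\ref{prop:sup-gini-canonical}, $1-D^{\rm G}=1-a$, while $\kappa_\theta(1-D_1^W)=\kappa_\theta(1-a)\ge 1-a$. This is consistent with the bound.
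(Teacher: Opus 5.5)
Your proposal is correct and follows the paper's proof essentially step for step: the quantile form of the Gini with $|2u-1|\le 1$ and the quantile formula for $W_1$, then $W_1(\mu_{\bpi},\mu_{\btheta})=(1-D_1^W)\,W_1(\delta_{\bar\theta},\mu_{\btheta})$, then Jensen conditional on $\Theta$ for $\kappa_\theta\le 2$ and the triangle inequality for $\kappa_\theta\ge 1$, and finally the ordering $D_1^W\ge D_1^{\rm cpl}$ for the chain. Your plan to verify the quantile representation of the weighted Gini via $\mathbb E|X-X'|=2\int_0^1(2u-1)Q_x(u)\,du$, which the paper takes as given, is a sensible addition.
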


The first inequality has a useful interpretation. Marginal Wasserstein
proximity to the benchmark forces Gini proximity: if $D_1^W$ is close to
one, the evaluated tariff must have approximately the same Gini coefficient
as the benchmark. The converse does not hold. Equal Gini coefficients impose
only one scalar restriction and do not imply equality, or even close
proximity, of the two marginal distributions.

\begin{proof}
Because $\bpi$ and $\btheta$ have the same positive mean, the quantile
representation \eqref{eq:sup-gini-quantile} gives
\[
\mathcal G(\bpi)-\mathcal G(\btheta)
=
\frac{1}{\bar\theta}
\int_0^1
(2u-1)
\{Q_{\bpi}(u)-Q_{\btheta}(u)\}
\,du.
\]
Hence, using $|2u-1|\leq1$,
\[
\begin{split}
\left|
\mathcal G(\bpi)-\mathcal G(\btheta)
\right|
&\leq
\frac{1}{\bar\theta}
\int_0^1
|Q_{\bpi}(u)-Q_{\btheta}(u)|
\,du
\\
&=
\frac{
W_1(\mu_{\bpi},\mu_{\btheta})
}{
\bar\theta
},
\end{split}
\]
which proves \eqref{eq:sup-gini-w1}.

By definition of the marginal coordinate,
\[
W_1(\mu_{\bpi},\mu_{\btheta})
=
\{1-D_1^W(\bpi;\btheta,\bw)\}
W_1(\delta_{\bar\theta},\mu_{\btheta}).
\]
Dividing \eqref{eq:sup-gini-w1} by
$\mathcal G(\btheta)$ therefore gives
\eqref{eq:sup-relative-gini-w1}.

It remains to establish the bounds on $\kappa_\theta$. Let $\Theta$ and
$\Theta'$ be independent draws from $\mu_{\btheta}$. Then
\[
W_1(\delta_{\bar\theta},\mu_{\btheta})
=
\mathbb E|\Theta-\mathbb E\Theta|,
\]
whereas
\[
\bar\theta\,\mathcal G(\btheta)
=
\frac12\mathbb E|\Theta-\Theta'|.
\]
Jensen's inequality, conditional on $\Theta$, gives
\[
\mathbb E|\Theta-\mathbb E\Theta|
\leq
\mathbb E|\Theta-\Theta'|,
\]
while the triangle inequality gives
\[
\mathbb E|\Theta-\Theta'|
\leq
2\mathbb E|\Theta-\mathbb E\Theta|.
\]
Therefore
\[
1
\leq
\frac{
2\mathbb E|\Theta-\mathbb E\Theta|
}{
\mathbb E|\Theta-\Theta'|
}
\leq2,
\]
which is exactly $\kappa_\theta\in[1,2]$.
Finally, $D_1^W\geq D_1^{\rm cpl}$ yields
\eqref{eq:sup-gini-chain}.
\end{proof}

\begin{remark}[No converse and no allocation information]
\label{rem:sup-gini-no-converse}
The Gini ratio need not lie in $[0,1]$: a balanced tariff can be more
dispersed than the benchmark and therefore have $D^{\rm G}>1$.
More importantly, $D^{\rm G}$ contains no policy-level allocation
information.

For example, consider the two-policy benchmark
\[
\btheta=(80,120)
\]
and the reversed tariff
\[
\bpi=(120,80).
\]
The two marginal distributions are identical, so
\[
D^{\rm G}=D_1^W=1.
\]
Nevertheless,
\[
D_1^{\rm cpl}=-1,
\qquad
M_1=2.
\]
Thus even equality of the \emph{entire} marginal distributions, and hence
a fortiori equality of their Gini coefficients, does not imply policy-level
alignment.
\end{remark}

\subsection{Residual ordered Gini on the canonical path}

The classical Gini coefficient above measures marginal premium dispersion.
Ordered-Lorenz diagnostics answer a different question: after sorting
policies by a residual risk relativity, how much benchmark segmentation
remains unexplained by the current tariff?

Along the canonical path, define
\[
R_i^{\rm res}
=
\frac{\theta_i}{\pi_i^{(a)}}.
\]
For $a<1$, the map
\[
\theta
\longmapsto
\frac{\theta}
{(1-a)\bar\theta+a\theta}
\]
is strictly increasing, so sorting policies by $R_i^{\rm res}$ is equivalent
to sorting them by benchmark risk $\theta_i$.

Let
\[
L_\theta(t)
=
\frac{1}{\bar\theta}
\int_0^tQ_\theta(u)\,du
\]
denote the benchmark Lorenz curve. When policies are ordered by
$R_i^{\rm res}$, the cumulative premium share under the canonical tariff is
\begin{equation}
\label{eq:sup-canonical-premium-lorenz}
P_a(t)
=
\frac{1}{\bar\theta}
\int_0^t
\{(1-a)\bar\theta+aQ_\theta(u)\}
\,du
=
(1-a)t+aL_\theta(t).
\end{equation}
The corresponding cumulative benchmark-risk share is $L_\theta(t)$.
Hence the residual ordered-Gini gap satisfies
\begin{equation}
\label{eq:sup-residual-gini}
\mathcal G_{\rm res}(\bpi^{(a)})
=
2\int_0^1
\{P_a(t)-L_\theta(t)\}
\,dt
=
(1-a)\mathcal G(\btheta).
\end{equation}
At the profiling endpoint $a=1$, the two cumulative-share curves coincide
and the residual ordered Gini is zero.

Combining \eqref{eq:sup-gini-canonical} and
\eqref{eq:sup-residual-gini} gives the decomposition
\begin{equation}
\label{eq:sup-gini-decomposition}
\mathcal G(\bpi^{(a)})
+
\mathcal G_{\rm res}(\bpi^{(a)})
=
\mathcal G(\btheta).
\end{equation}
Thus, on the canonical path, benchmark Gini dispersion splits exactly into
a priced component $a\mathcal G(\btheta)$ and a residual segmentation
component $(1-a)\mathcal G(\btheta)$.

This identity is specific to the canonical affine contraction. More
generally, classical Gini dispersion, residual ordered Gini, marginal
Wasserstein profiling and coupled profiling describe different features of
a tariff and need not admit such an additive decomposition.

\section{Barycentric group parity: proof and $L^2$ decomposition}
\label{sup:parity}

We use the notation of the main text. Let $Q_s^\Theta$ be the conditional benchmark quantiles, $Q_B^\Theta=\sum_sp_sQ_s^\Theta$, and write $c=1-a$. At rank $u$, define
\[
A(u)=Q_B^\Theta(u)-\bar\theta,
\qquad
Z_s(u)=Q_s^\Theta(u)-Q_B^\Theta(u),
\]
so that $\sum_sp_sZ_s(u)=0$. The residual benchmark error on the canonical path is $c\{A(u)+Z_s(u)\}$, while after parity correction it is $cA(u)+Z_s(u)$.

For fixed $u$ and $t>0$, set
\[
h(t)=t^{-p}\sum_sp_s|Z_s+tA|^p.
\]
With $\psi(y)=|y|^{p-1}\operatorname{sign}(y)$ (and a monotone subgradient at zero for $p=1$),
\[
h'(t)=-pt^{-p-1}\sum_sp_sZ_s\psi(Z_s+tA)\leq0,
\]
because $z\mapsto\psi(z+tA)$ is increasing and $\sum_sp_sZ_s=0$. Hence $h(c)\geq h(1)$ for $0<c\leq1$, which gives
\[
\sum_sp_s|cA+Z_s|^p
\geq
c^p\sum_sp_s|A+Z_s|^p.
\]
Integrating over $u$ proves the inequality stated in the main text.

For $p=2$, define
\[
V_B=\int_0^1\{Q_B^\Theta(u)-\bar\theta\}^2\,du,
\qquad
V_S=\sum_sp_s\int_0^1\{Q_s^\Theta(u)-Q_B^\Theta(u)\}^2\,du.
\]
Then
\[
\operatorname{Var}(\Theta)=V_B+V_S,
\]
and
\[
\|\Pi_{\rm par}^{(a)}-\Theta\|_{L^2}^2=(1-a)^2V_B+V_S,
\]
whereas
\[
\|\Pi^{(a)}-\Theta\|_{L^2}^2=(1-a)^2(V_B+V_S).
\]
Therefore
\[
\|\Pi_{\rm par}^{(a)}-\Theta\|_{L^2}^2
-
\|\Pi^{(a)}-\Theta\|_{L^2}^2
=
\{1-(1-a)^2\}V_S.
\]
The component $V_S$ is precisely the between-group Wasserstein component that must be restored to policy-level mismatch when the conditional premium distributions are equalized.

\section{Additional synthetic diagnostics}
\label{sup:synthetic}

The main text retains the premium-quantile and transfer figures. This section reports the secondary Gini and residual-segmentation diagnostics.

\begin{table}[t]
\centering
\caption{Gini diagnostics for the synthetic portfolio.}
\label{tab:synthetic-gini}
{\footnotesize
\begin{tabular}{lrrrrrr}
\toprule
Pricing rule & $D^{\mathrm{cpl}}_1$ & $D^W_1$ & $G(\pi)$ & $G(\pi)/G(\theta)$ & Residual  & Rank 
\\
 &  &  &  &  &  Gini & Gini
\\
\midrule
Pooled & 0.000 & 0.000 & 0.000 & 0.000 & 0.528 & 0.000
\\
Coarse tariff & 0.392 & 0.703 & 0.444 & 0.842 & 0.334 & 0.443
\\
GLM with proxies & 0.863 & 0.950 & 0.518 & 0.982 & 0.075 & 0.523
\\
Full GLM & 0.961 & 0.979 & 0.520 & 0.986 & 0.022 & 0.527
\\
Shrinkage & 0.590 & 0.591 & 0.312 & 0.592 & 0.218 & 0.527
\\
Capped tariff & 0.549 & 0.549 & 0.302 & 0.572 & 0.262 & 0.502
\\
No-proxy GLM & 0.145 & 0.473 & 0.280 & 0.530 & 0.459 & 0.295
\\
Oracle & 1.000 & 1.000 & 0.528 & 1.000 & 0.000 & 0.528
\\
\bottomrule
\end{tabular}}
\begin{flushleft}
\footnotesize Notes: $G(\pi)$ is the classical Gini coefficient of the premium distribution. The residual ordered Gini is computed by sorting policies according to $\theta_i/\pi_i$, following the ordered Lorenz approach of Frees, Meyers and Cummings. A high residual ordered Gini indicates that the current tariff still leaves segmentation opportunities relative to the benchmark risk score. The rank Gini measures risk ranking and is invariant to monotone transformations of the premium score.
\end{flushleft}
\end{table}

\begin{figure}[t]
\centering
\includegraphics[width=.82\textwidth]{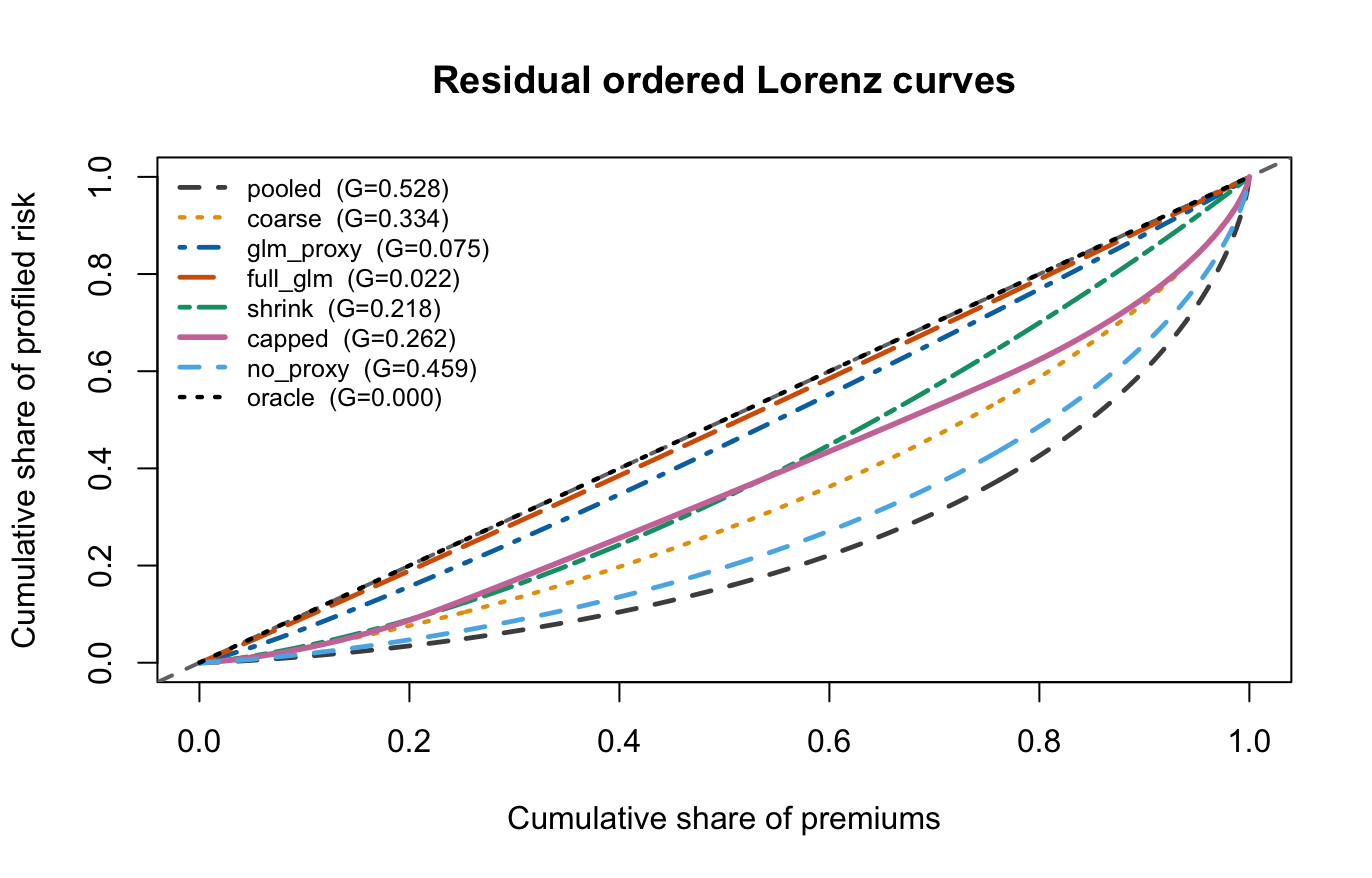}
\caption{Residual ordered Lorenz curves in the synthetic experiment, with policies sorted by $R_i^{\rm res}=\theta_i/\pi_i$. Greater curvature below the diagonal indicates more residual segmentation opportunity.}
\label{sup:fig-synthetic-lorenz}
\end{figure}

\section{Additional motor-portfolio diagnostics}
\label{sup:empirical}

Residual ordered-Lorenz curves and local slopes provide a policy-ranking diagnostic complementary to the pooling--profiling coordinates.

\begin{figure}[t]
\centering
\includegraphics[width=.82\textwidth]{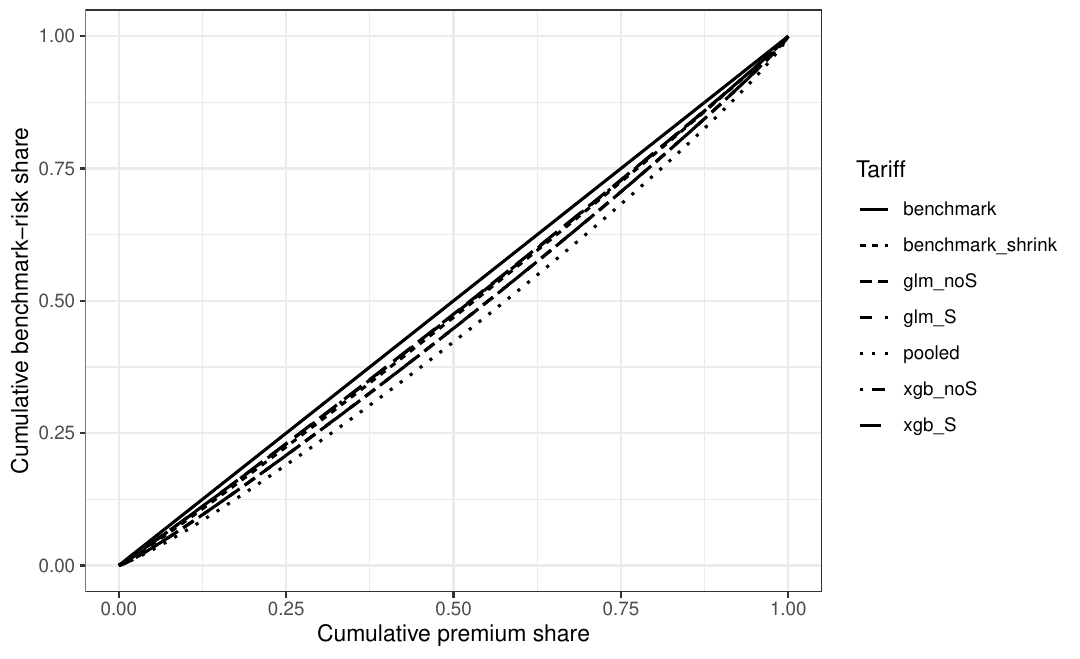}
\caption{Residual ordered Lorenz curves for the real-data tariffs. Curves close to the diagonal leave little residual segmentation opportunity relative to the stated benchmark.}
\label{sup:fig-real-residual-lorenz}
\end{figure}

\begin{figure}[t]
\centering
\includegraphics[width=.82\textwidth]{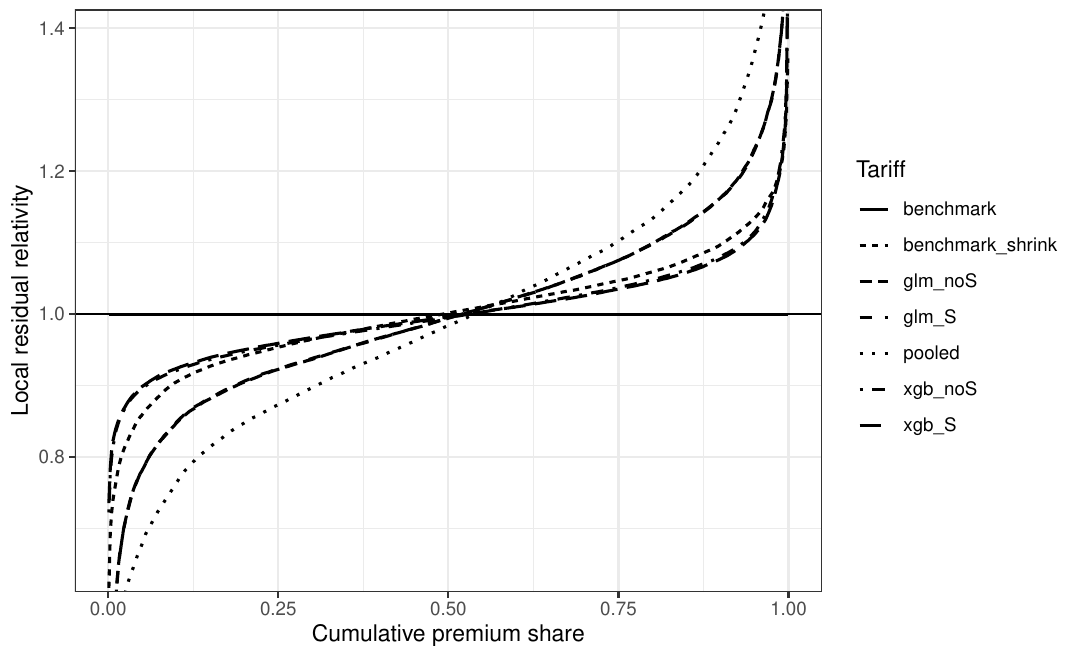}
\caption{Local slopes of the residual ordered Lorenz curves. A value of one indicates local alignment; values above one indicate benchmark risk exceeding the charged premium at that point, while values below one indicate the reverse.}
\label{sup:fig-real-residual-derivative}
\end{figure}

We also report variable-removal diagnostics for surrogate models of the XGBoost benchmark. These are predictive replacement diagnostics, not causal contributions or unique allocations of shared proxy information.

\begin{table}[t]
\centering
\small
\caption{Drop-one-variable diagnostics for surrogate models of the XGBoost benchmark. Each entry is the loss of profiling after removing the indicated covariate from the full surrogate. These are predictive diagnostics, not causal contributions.}
\label{tab:variable-drop-compact}
\begin{tabular}{lrrrr}
\toprule
& \multicolumn{2}{c}{GLM surrogate} & \multicolumn{2}{c}{XGBoost surrogate} \\
\cmidrule(lr){2-3}\cmidrule(lr){4-5}
Variable & $\Delta D_1^{\mathrm{cpl}}$ & $\Delta D_1^W$ & $\Delta D_1^{\mathrm{cpl}}$ & $\Delta D_1^W$ \\
\midrule
\texttt{VehValue} & 0.014 & 0.009 & 0.100 & 0.075 \\
\texttt{VehAge}   & 0.034 & 0.079 & 0.021 & 0.012 \\
\texttt{VehBody}  & 0.051 & 0.074 & 0.045 & 0.019 \\
\texttt{DrivAge}  & 0.220 & 0.224 & 0.233 & 0.130 \\
\texttt{Gender}   & 0.003 & 0.003 & 0.006 & 0.002 \\
\bottomrule
\end{tabular}
\end{table}

The drop-one results identify \texttt{DrivAge} as the dominant variable in both surrogate families. Removing it lowers the coupled coordinate by $0.217$ in the GLM surrogate and $0.233$ in the XGBoost surrogate. The incremental effect of \texttt{Gender} is much smaller, $0.003$ and $0.006$, respectively. This does not conflict with the group-parity results: a variable may add little unique predictive information once correlated covariates are present while still identifying groups across which premiums and benchmark-relative transfers differ.

Single-variable surrogate results are reported below for completeness.

\begin{table}[t]
\centering
\small
\caption{Single-variable surrogate profiling power. Each row reports the demutualization indices obtained by predicting the benchmark risk \(\widehat\theta\) from a single covariate.}
\label{tab:variable-single-surrogate}
\begin{tabular}{lrrrrrr}
\toprule
& \multicolumn{3}{c}{GLM surrogate} & \multicolumn{3}{c}{XGBoost surrogate} \\
\cmidrule(lr){2-4}\cmidrule(lr){5-7}
Variable & $D^{\mathrm{cpl}}_1$ & $D^W_1$ & $M_1$ & $D^{\mathrm{cpl}}_1$ & $D^W_1$ & $M_1$ \\
\midrule
\texttt{VehValue} & 0.068 & 0.322 & 11.575 & 0.153 & 0.511 & 16.327 \\
\texttt{VehAge} & 0.067 & 0.421 & 16.161 & 0.067 & 0.421 & 16.158 \\
\texttt{VehBody} & 0.043 & 0.265 & 10.126 & 0.043 & 0.253 & 9.558 \\
\texttt{DrivAge} & 0.184 & 0.495 & 14.192 & 0.183 & 0.492 & 14.103 \\
\texttt{Gender} & 0.006 & 0.095 & 4.091 & 0.006 & 0.096 & 4.101 \\
\bottomrule
\end{tabular}
\end{table}

\section{Sensitivity to the balancing convention}
\label{app:balancing-sensitivity}

The main analysis uses multiplicative balancing,
\[
\pi_i^{\mathrm{mult}}
=
\pi_i^{\mathrm{raw}}
\frac{\sum_j w_j\theta_j}
     {\sum_j w_j\pi_j^{\mathrm{raw}}},
\]
which preserves relative premium differences while enforcing equality between
weighted average premium and weighted average benchmark risk. As a robustness
check, we also consider additive balancing,
\[
\pi_i^{\mathrm{add}}
=
\pi_i^{\mathrm{raw}}
-\sum_j w_j\pi_j^{\mathrm{raw}}
+\sum_j w_j\theta_j.
\]
Both transformations impose the same portfolio mean but modify the raw tariff in
different ways.

\begin{table}[t]
\centering
\small
\caption{Sensitivity of the empirical pooling--profiling coordinates to the balancing convention. Multiplicative balancing rescales each raw tariff to the benchmark portfolio mean, whereas additive balancing shifts all premiums by a common amount. Differences are additive minus multiplicative.}
\label{tab:balancing-sensitivity}
\begin{tabular}{lrrrrrr}
\toprule
 & \multicolumn{3}{c}{$D_1^{\mathrm{cpl}}$} & \multicolumn{3}{c}{$D_1^W$} \\
\cmidrule(lr){2-4}\cmidrule(lr){5-7}
Tariff & Mult. & Add. & $\Delta$ & Mult. & Add. & $\Delta$ \\
\midrule
GLM, no gender & 0.3197 & 0.3199 & $+0.0002$ & 0.8906 & 0.8903 & $-0.0003$ \\
GLM, with gender & 0.3236 & 0.3238 & $+0.0002$ & 0.8939 & 0.8937 & $-0.0002$ \\
XGBoost, no gender & 0.6554 & 0.6558 & $+0.0004$ & 0.8063 & 0.8074 & $+0.0011$ \\
XGBoost, with gender & 0.6803 & 0.6808 & $+0.0005$ & 0.8113 & 0.8125 & $+0.0012$ \\
\bottomrule
\end{tabular}
\end{table}

Table~\ref{tab:balancing-sensitivity} shows that the resulting
pooling--profiling coordinates are virtually unchanged. Across the four fitted
tariffs, the largest absolute change is approximately $5.4\times10^{-4}$ for
$D_1^{\mathrm{cpl}}$ and $1.18\times10^{-3}$ for $D_1^W$. The additive
correction also produces no non-positive premium in this application; the
smallest additively balanced premium is approximately $47.6$. The empirical
conclusions are therefore not driven by the particular balancing convention.

\section{Conditional multiplier-bootstrap uncertainty}
\label{app:bootstrap-uncertainty}

The empirical coordinates are functionals of a weighted finite portfolio. We therefore
assess whether the main contrasts reported in the main text could be
explained by uncertainty in the composition of the observed portfolio. The exercise
is conditional on the fitted objects: the cross-fitted benchmark
$\widehat{\boldsymbol\theta}$, the original tariff vectors, and their
barycentric parity-corrected versions are held fixed.

For bootstrap replication $b$, let
$\xi_1^{(b)},\ldots,\xi_n^{(b)}$ be independent standard exponential
multipliers and define
\[
 w_i^{(b)}
 =
 \frac{w_i\xi_i^{(b)}}
 {\sum_{j=1}^n w_j\xi_j^{(b)}}.
\]
We recompute the pooling--profiling coordinates using
$\boldsymbol w^{(b)}$ in place of $\boldsymbol w$. The same multiplier weights
are used for the original and parity-corrected versions of each tariff, so uncertainty for
the parity effects is assessed from the paired bootstrap differences
\[
 \Delta D_1^{\mathrm{cpl}},
 \qquad
 \Delta D_1^W,
 \qquad
 \Delta M_1,
 \qquad
 \Delta W_2^{\mathrm{group}}.
\]
Table~\ref{tab:bootstrap-uncertainty} reports percentile $95\%$ intervals based
on $1{,}000$ replications.

\begin{table}[t]
\centering
\small
\caption{Conditional multiplier-bootstrap uncertainty. Point estimates are shown on the first line of each model entry; percentile 95\% intervals based on 1,000 bootstrap replications are reported underneath in smaller type.}
\label{tab:bootstrap-uncertainty}

\textbf{Panel A. Original tariffs}

\medskip

\begin{tabular}{lccc}
\toprule
Model
& $D_1^{\mathrm{cpl}}$
& $D_1^W$
& $\overline M_1$ \\
\midrule
GLM without $S$ & 0.329 & 0.887 & 0.559 \\
 & {\tiny [0.322, 0.335]} & {\tiny [0.880, 0.893]} & {\tiny [0.548, 0.569]} \\[-0.15em]
\addlinespace[0.35em]
GLM with $S$ & 0.332 & 0.891 & 0.558 \\
 & {\tiny [0.326, 0.338]} & {\tiny [0.883, 0.897]} & {\tiny [0.548, 0.568]} \\[-0.15em]
\addlinespace[0.35em]
XGBoost without $S$ & 0.661 & 0.808 & 0.147 \\
 & {\tiny [0.658, 0.664]} & {\tiny [0.805, 0.811]} & {\tiny [0.144, 0.150]} \\[-0.15em]
\addlinespace[0.35em]
XGBoost with $S$ & 0.677 & 0.809 & 0.132 \\
 & {\tiny [0.674, 0.680]} & {\tiny [0.806, 0.812]} & {\tiny [0.129, 0.135]} \\[-0.15em]
\addlinespace[0.35em]
\bottomrule
\end{tabular}

\bigskip

\textbf{Panel B. Fair minus original}

\medskip

\begin{tabular}{lcccc}
\toprule
Model
& $\Delta D_1^{\mathrm{cpl}}$
& $\Delta D_1^W$
& $\Delta\overline M_1$
& $\Delta W_2^{\mathrm{group}}$ \\
\midrule
GLM without $S$ & -0.0064 & +0.0022 & +0.0086 & -8.631 \\
 & {\tiny [-0.0072, -0.0056]} & {\tiny [+0.0012, +0.0033]} & {\tiny [+0.0072, +0.0100]} & {\tiny [-8.219, -3.446]} \\[-0.15em]
\addlinespace[0.35em]
GLM with $S$ & -0.0080 & +0.0016 & +0.0096 & -12.331 \\
 & {\tiny [-0.0091, -0.0068]} & {\tiny [+0.0000, +0.0031]} & {\tiny [+0.0075, +0.0113]} & {\tiny [-11.640, -7.694]} \\[-0.15em]
\addlinespace[0.35em]
XGBoost without $S$ & -0.0044 & +0.0000 & +0.0044 & -6.025 \\
 & {\tiny [-0.0052, -0.0038]} & {\tiny [-0.0007, +0.0006]} & {\tiny [+0.0040, +0.0050]} & {\tiny [-6.601, -1.527]} \\[-0.15em]
\addlinespace[0.35em]
XGBoost with $S$ & -0.0141 & -0.0031 & +0.0110 & -9.830 \\
 & {\tiny [-0.0151, -0.0132]} & {\tiny [-0.0042, -0.0021]} & {\tiny [+0.0102, +0.0120]} & {\tiny [-10.044, -5.372]} \\[-0.15em]
\addlinespace[0.35em]
Reference benchmark & -0.1167 & -0.0087 & +0.1079 & -11.974 \\
 & {\tiny [-0.1178, -0.1156]} & {\tiny [-0.0107, -0.0084]} & {\tiny [+0.1059, +0.1085]} & {\tiny [-13.185, -5.128]} \\[-0.15em]
\addlinespace[0.35em]
Benchmark shrinkage & -0.0116 & -0.0034 & +0.0082 & -7.185 \\
 & {\tiny [-0.0124, -0.0107]} & {\tiny [-0.0043, -0.0026]} & {\tiny [+0.0079, +0.0084]} & {\tiny [-7.912, -3.078]} \\[-0.15em]
\addlinespace[0.35em]
\bottomrule
\end{tabular}

\begin{minipage}{0.96\textwidth}
\tiny
\emph{Notes:} The bootstrap is conditional on the fitted benchmark, 
the original tariff vectors, and their fair post-processed versions. 
The same exponential multiplier weights are used for the original and 
fair versions within each replication, so Panel B reports paired 
bootstrap differences.
\end{minipage}
\end{table}

Panel~A shows that conditional finite-portfolio uncertainty is small relative to
the differences between the pricing rules. The coupled coordinates of the two
GLM tariffs are about $0.32$, whereas those of the XGBoost tariffs are about
$0.66$ and $0.68$, with narrow intervals. The opposite ordering is observed for
the marginal Wasserstein coordinate: the GLM tariffs have values close to $0.89$,
compared with approximately $0.81$ for XGBoost. Consequently, the normalized
allocation mismatch is about $0.57$ for the GLMs but only $0.13$--$0.15$ for
XGBoost. The distinction between marginal differentiation and individual
allocation is therefore not an artifact of finite-portfolio variation.

The inclusion of the sensitive attribute has only a modest effect on the reported
coordinates. For the GLM tariffs, the intervals with and without $S$ overlap
substantially. For XGBoost, including $S$ produces a modest increase in coupled
profiling, from $0.655$ to $0.680$, while leaving the marginal coordinate close to
$0.81$. This is consistent with the variable-level diagnostics: the sensitive
attribute carries limited incremental profiling information once the remaining
covariates are available.

Panel~B confirms that the group-parity post-processing operates primarily through
premium allocation rather than through a large change in marginal dispersion.
The paired interval for $\Delta D_1^{\mathrm{cpl}}$ is strictly negative for every
pricing rule. The effect is largest at the reference benchmark, where coupled
profiling falls materially while the marginal Wasserstein coordinate changes only
slightly. For the fitted tariffs, the losses of coupled profiling are smaller but
remain clearly different from zero, and the mismatch increases in every case.
Changes in $D_1^W$ are comparatively small and may be slightly positive, negligible,
or negative depending on the tariff.

The group Wasserstein gaps also decline in every case. Their bootstrap intervals
are wider and more asymmetric than those of the pooling--profiling coordinates.
This reflects the conditional design of the exercise: the parity-corrected tariff vectors are
estimated once on the original portfolio and then evaluated under perturbed
portfolio weights. The results should therefore be read as uncertainty in the
performance of fixed tariff vectors under finite-portfolio composition changes,
rather than as full sampling uncertainty for the estimation and post-processing
procedures.

This conditional bootstrap does not incorporate benchmark estimation,
hyperparameter selection, or the choice of cross-fitting folds. Those sources of
uncertainty are conceptually distinct. Sensitivity to the profiling frontier is
examined through the benchmark ladder in
the benchmark-sensitivity analysis in the main text; a full refit bootstrap would require
re-estimating all candidate tariffs and benchmarks in every replication.

\section{Proofs of the main finite-portfolio results}
\label{sup:main-proofs}

\subsection{Normalization and canonical-path calibration}

For the coupled coordinate,
\[
\bpi^{(a)}-\btheta=(1-a)(\bar\theta\boldsymbol 1-\btheta),
\]
so positive homogeneity of the weighted $L^p$ norm gives $D_p^{\rm cpl}(\bpi^{(a)})=a$. The endpoint characterization, scale invariance, relabelling invariance and continuity are immediate away from a zero denominator.

For the marginal coordinate, the quantile function on the canonical path is
\[
Q_{\bpi^{(a)}}(u)=(1-a)\bar\theta+aQ_{\btheta}(u).
\]
Hence
\[
W_p^p(\mu_{\bpi^{(a)}},\mu_{\btheta})
=(1-a)^p\int_0^1|\bar\theta-Q_{\btheta}(u)|^p\,du,
\]
which yields $D_p^W(\bpi^{(a)})=a$. The remaining properties follow from the metric properties of $W_p$ and from the fact that relabelling leaves the empirical measures unchanged.

\subsection{Transfer interpretation}

If $\bpi$ is balanced, then $\sum_iw_i(\pi_i-\theta_i)=0$. Therefore the weighted positive and negative parts of $\pi_i-\theta_i$ have the same total mass, and
\[
T(\bpi;\btheta,\bw)=\frac12\sum_iw_i|\pi_i-\theta_i|.
\]
The same identity holds at the pooling endpoint. Substitution into the definition of $D_1^{\rm cpl}$ gives the stated transfer formula.

\subsection{Ordering and allocation mismatch}

The discrete matching that pairs each premium $\pi_i$ with its own benchmark value $\theta_i$ is an admissible coupling between $\mu_{\bpi}$ and $\mu_{\btheta}$. Hence
\[
W_p^p(\mu_{\bpi},\mu_{\btheta})
\leq
\sum_iw_i|\pi_i-\theta_i|^p
=
\|\bpi-\btheta\|_{p,\bw}^p.
\]
Because the two coordinates have the same denominator, $D_p^W\geq D_p^{\rm cpl}$. Dividing the difference of the $p$th powers by the common denominator gives
\[
M_p=(1-D_p^{\rm cpl})^p-(1-D_p^W)^p,
\]
and $M_1=D_1^W-D_1^{\rm cpl}$ follows immediately.

\clearpage
\bibliographystyle{plainnat}
\bibliography{biblio}

\end{document}